\newtheorem{theorem}{Theorem}
\newtheorem{example}{Example}
\newtheorem{lemma}{Lemma}
\newtheorem{remark}{Remark}
\newtheorem{definition}{Definition}
\newtheorem{corollary}{Corollary}
\newcommand{\REFlem}[1]{\text{Lemma~\ref{#1}}}
\newcommand{\REFthm}[1]{\text{Theorem~\ref{#1}}}
\newcommand{\REFdef}[1]{Definition~\ref{#1}}
\newcommand{\deff}{:=}
\newcommand{\BR}[1]{\left( #1 \right)}
\newcommand{\ON}[1]{\operatorname{#1}}
\def\clap#1{\hbox to 0pt{\hss#1\hss}}
\newcommand{\val}[1]{\ensuremath{\mathsf{#1}}}
\newcommand{\DiCases}[4]{\ensuremath{\begin{cases}%
#1&,~#2\\%
#3&,~#4%
\end{cases}}}%
\newif\ifFIRST
\newif\ifSECOND
\let\LISTOP\relax
\newcommand{\List}[4][\;]{#3#1%
	\FIRSTtrue
	\@for\i:=#2\do{%
	\ifFIRST\LISTOP{\i}\FIRSTfalse\else,\LISTOP{\i}\fi%
	}%
	#1#4%
	\let\LISTOP\relax
}
\newcounter{DINGLIST}
\newcommand{\markD}[3][\;\;]{\text{\ding{\the\numexpr171+\theDINGLIST}\stepcounter{DINGLIST}}#1#3}
\newcommand{\ZZ}{\textsl{Zu Zeigen:}~\@ifstar\ZZStar\ZZNoStar}
\newcommand{\ZZStar}[1]{\begin{align*}#1\end{align*}}
\newcommand{\ZZNoStar}[1]{\ensuremath{#1}}
\newcommand{\THATIS}{i.e.\xspace}
\newcommand{\SUCHTHAT}{s.t.\xspace}
\newcommand{\IFF}{\unskip~\text{iff}~}
\newcommand{\SHOW}[2][]{Show \ifthenelse{\isempty{#1}}{}{#1, \THATIS, }\ensuremath{#2}:}
\newcommand{\SORRY}[1]{\emph{\color{red}Sorry: #1}\@latex@warning{SORRY: #1}}
\newcommand{\propNeg}{\@ifstar\propNegStar\propNegNoStar}
\newcommand{\propNegStar}[1]{\ensuremath{\left(\propNegNoStar{#1}\right)}}
\newcommand{\propNegNoStar}[2][\cdot]{\ensuremath{\neg\ifthenelse{\isempty{#2}}{#1}{#2}}}
\newcommand{\propConj}{\@ifstar\propConjStar\propConjNoStar}
\newcommand{\propConjStar}[2]{\ensuremath{\left(\propConjNoStar{#1}{#2}\right)}}
\newcommand{\propConjNoStar}[3][\cdot]{\ensuremath{\ifthenelse{\isempty{#2}}{#1}{#2}\wedge\ifthenelse{\isempty{#3}}{#1}{#3}}}
\newcommand{\propDisj}{\@ifstar\propDisjStar\propDisjNoStar}
\newcommand{\propDisjStar}[2]{\ensuremath{\left(\propDisjNoStar{#1}{#2}\right)}}
\newcommand{\propDisjNoStar}[3][\cdot]{\ensuremath{\ifthenelse{\isempty{#2}}{#1}{#2}\vee\ifthenelse{\isempty{#3}}{#1}{#3}}}
\newcommand{\propImp}{\@ifstar\propImpStar\propImpNoStar}
\newcommand{\propImpStar}[2]{\ensuremath{\left(\propImpNoStar{#1}{#2}\right)}}
\newcommand{\propImpNoStar}[3][\cdot]{\ensuremath{\ifthenelse{\isempty{#2}}{#1}{#2}\Rightarrow\ifthenelse{\isempty{#3}}{#1}{#3}}}
\newcommand{\propAequ}{\@ifstar\propAequStar\propAequNoStar}
\newcommand{\propAequStar}[2]{\ensuremath{\left(\propAequNoStar{#1}{#2}\right)}}
\newcommand{\propAequNoStar}[3][\cdot]{\ensuremath{\ifthenelse{\isempty{#2}}{#1}{#2}\Leftrightarrow\ifthenelse{\isempty{#3}}{#1}{#3}}}
\newcommand{\propXOR}{\@ifstar\propXORStar\propXORNoStar}
\newcommand{\propXORStar}[2]{\ensuremath{\left(\propXORNoStar{#1}{#2}\right)}}
\newcommand{\propXORNoStar}[3][\cdot]{\ensuremath{\ifthenelse{\isempty{#2}}{#1}{#2}\oplus\ifthenelse{\isempty{#3}}{#1}{#3}}}
\newcommand{\AllQ}{\@ifstar\AllQStar\AllQNoStar}
\newcommand{\AllQStar}[3][\;]{\ensuremath{\left(\forall #2#1.#1#3\right)}}
\newcommand{\AllQNoStar}[3][\;]{\ensuremath{\forall #2#1.#1#3}}
\newcommand{\AllQu}{\@ifstar\AllQuStar\AllQuNoStar}
\newcommand{\AllQuStar}[3][\;]{\ensuremath{\left(\forall^{\infty} #2#1.#1#3\right)}}
\newcommand{\AllQuNoStar}[3][\;]{\ensuremath{\forall^{\infty} #2#1.#1#3}}
\newcommand{\ExQ}{\@ifstar\ExQStar\ExQNoStar}
\newcommand{\ExQStar}[3][\;]{\ensuremath{\left(\exists #2#1.#1#3\right)}}
\newcommand{\ExQNoStar}[3][\;]{\ensuremath{\exists #2#1.#1#3}}
\newcommand{\NExQ}{\@ifstar\NExQStar\NExQNoStar}
\newcommand{\NExQStar}[3][\;]{\ensuremath{\left(\nexists #2#1.#1#3\right)}}
\newcommand{\NExQNoStar}[3][\;]{\ensuremath{\nexists #2#1.#1#3}}
\newcommand{\UniqueQ}{\@ifstar\UniqueQStar\UniqueQNoStar}
\newcommand{\UniqueQStar}[3][\;]{\ensuremath{\left(\exists! #2#1.#1#3\right)}}
\newcommand{\UniqueQNoStar}[3][\;]{\ensuremath{\exists! #2#1.#1#3}}
\newcommand{\Set}[2][]{\List[#1]{#2}{\{}{\}}}
\newcommand{\VSet}[2][]{\let\LISTOP\val\List[#1]{#2}{\{}{\}}}
\newcommand{\Tuple}[2][]{\List[#1]{#2}{(}{)}}
\newcommand{\VTuple}[2][]{\let\LISTOP\val\List[#1]{#2}{(}{)}}
\newcommand{\SetComp}[3][]{\{#1#2#1\mid#1#3#1\}}
\newcommand{\SetCompX}[3][]{\left\{#1#2#1\middle\vert#1#3#1\right\}}
\newcommand{\POWERSET}{\@ifstar\POWERSETStar\POWERSETNoStar}
\newcommand{\POWERSETStar}[1]{\ensuremath{\ON{2}^{\ifthenelse{\isempty{#1}}{\cdot}{#1}}}}
\newcommand{\POWERSETNoStar}[1]{\ensuremath{\ON{2}^{\ifthenelse{\isempty{#1}}{\cdot}{#1}}}}
\newcommand{\FINPOWERSET}{\@ifstar\FINPOWERSETStar\FINPOWERSETNoStar}
\newcommand{\FINPOWERSETStar}[1]{\ensuremath{\mathcal{P}_{\ON{fin}}(\ifthenelse{\isempty{#1}}{\cdot}{#1})}}
\newcommand{\FINPOWERSETNoStar}[1]{\ensuremath{\mathcal{P}_{\ON{fin}}\left(\ifthenelse{\isempty{#1}}{\cdot}{#1}\right)}}
\newcommand{\UNION}{\@ifstar\UNIONStar\UNIONNoStar}
\newcommand{\UNIONStar}[2]{\ensuremath{\left(\UNIONNoStar{#1}{#2}\right)}}
\newcommand{\UNIONNoStar}[2]{\ensuremath{\ifthenelse{\isempty{#1}}{\cdot}{#1}\cup\ifthenelse{\isempty{#2}}{\cdot}{#2}}}
\newcommand{\UNIOND}{\@ifstar\UNIONDStar\UNIONDNoStar}
\newcommand{\UNIONDStar}[2]{\ensuremath{\left(\UNIONDNoStar{#1}{#2}\right)}}
\newcommand{\UNIONDNoStar}[2]{\ensuremath{\ifthenelse{\isempty{#1}}{\cdot}{#1}\uplus\ifthenelse{\isempty{#2}}{\cdot}{#2}}}
\newcommand{\SETMINUS}{\@ifstar\SETMINUSStar\SETMINUSNoStar}
\newcommand{\SETMINUSStar}[2]{\ensuremath{\left(\SETMINUSNoStar{#1}{#2}\right)}}
\newcommand{\SETMINUSNoStar}[2]{\ensuremath{\ifthenelse{\isempty{#1}}{\cdot}{#1}\setminus\ifthenelse{\isempty{#2}}{\cdot}{#2}}}
\newcommand{\INTERSECT}{\@ifstar\INTERSECTStar\INTERSECTNoStar}
\newcommand{\INTERSECTStar}[2]{\ensuremath{\left(\INTERSECTNoStar{#1}{#2}\right)}}
\newcommand{\INTERSECTNoStar}[2]{\ensuremath{\ifthenelse{\isempty{#1}}{\cdot}{#1}\cap\ifthenelse{\isempty{#2}}{\cdot}{#2}}}
\newcommand{\CARTPROD}{\@ifstar\CARTPRODStar\CARTPRODNoStar}
\newcommand{\CARTPRODStar}[2]{\ensuremath{\left(\CARTPRODNoStar{#1}{#2}\right)}}
\newcommand{\CARTPRODNoStar}[2]{\ensuremath{\ifthenelse{\isempty{#1}}{\cdot}{#1}\times\ifthenelse{\isempty{#2}}{\cdot}{#2}}}
\newcommand{\FINCOUNT}{\@ifstar\FinCountStar\FinCountNoStar}
\newcommand{\FinCountStar}[1]{\ensuremath{\#(\ifthenelse{\isempty{#1}}{\cdot}{#1})}}
\newcommand{\FinCountNoStar}[1]{\ensuremath{\#\left(\ifthenelse{\isempty{#1}}{\cdot}{#1}\right)}}
\newcommand{\parfun}{\ensuremath{\ON{\rightharpoonup}}}
\newcommand{\fun}{\ensuremath{\ON{\rightarrow}}}
\tikzstyle{istate}=[state,initial,initial text=]
\tikzstyle{fistate}=[state,accepting,initial,initial text=]
\tikzstyle{fistateA}=[state,accepting,initial,initial text=,initial where=above]
\tikzstyle{fistateB}=[state,accepting,initial,initial text=,initial where=below]
\tikzstyle{fistateL}=[state,accepting,initial,initial text=,initial where=left]
\tikzstyle{fistateR}=[state,accepting,initial,initial text=,initial where=right]
\tikzstyle{ifstate}=[state,accepting,initial,initial text=]
\tikzstyle{ifstateA}=[state,accepting,initial,initial text=,initial where=above]
\tikzstyle{ifstateB}=[state,accepting,initial,initial text=,initial where=below]
\tikzstyle{ifstateL}=[state,accepting,initial,initial text=,initial where=left]
\tikzstyle{ifstateR}=[state,accepting,initial,initial text=,initial where=right]
\tikzstyle{istateA}=[state,initial,initial text=,initial where=above]
\tikzstyle{istateB}=[state,initial,initial text=,initial where=below]
\tikzstyle{istateL}=[state,initial,initial text=,initial where=left]
\tikzstyle{istateR}=[state,initial,initial text=,initial where=right]
\tikzstyle{fstate}=[state,accepting]
\tikzstyle{SFSautomat}=[->,>=stealth',shorten >=1pt,auto,node distance=2cm,on grid,semithick,inner sep=2pt,bend angle=45]
\newenvironment{propConjA}{\left(\def\unionAtest{1}\begin{array}{@{\if\unionAtest1\gdef\unionAtest{0}\phantom{\wedge}\else\wedge\fi}l@{}}}{\end{array}\right)}
  \newlength{\SFS@HEIGHT}
  \newlength{\SFS@WIDTH}
  \newcommand{\SplitX}[2]{
	    \settoheight{\SFS@HEIGHT}{$#2$}
	    \settowidth{\SFS@WIDTH}{$#2$}
	    \mbox{\begin{tikzpicture}[baseline=(current bounding box.center)]
	    \node[] (E) at (0,0) {$#1$};
	    \node[inner sep=0pt] (F) at ($(E.south west)+(1ex,-1ex)+(3ex+.5\SFS@WIDTH,-\SFS@HEIGHT)$) {$#2$};
	    \node[] (E) at (0,0) {\phantom{$#1$}};
	    \draw[fill] ($(E.east)+(1ex,0ex)$) circle (.2ex);
	    \draw[-] ($(E.east)+(1ex,0ex)$) -- ($(E.south east)+(1ex,-0.5ex)$) -- ($(E.south west)+(1ex,-0.5ex)$) -- ($(E.south west)+(1ex,-1ex)-(0,\SFS@HEIGHT)$) -- ($(E.south west)+(2.5ex,-1ex)-(0,\SFS@HEIGHT)$);
	    \draw[fill] ($(E.south west)+(2.5ex,-1ex)-(0,\SFS@HEIGHT)$) circle (.2ex);
	    \end{tikzpicture}}}
  \newcommand{\SplitS}[2]{
	    \settoheight{\SFS@HEIGHT}{$#2$}
	    \settowidth{\SFS@WIDTH}{$#2$}
	    \mbox{\begin{tikzpicture}[baseline=(current bounding box.center)]
	    \node[] (E) at (0,0) {$#1$};
	    \node[inner sep=0pt] (F) at ($(E.south west)+(1ex,0.5ex)+(3ex+.5\SFS@WIDTH,-\SFS@HEIGHT)$) {$#2$};
	    \end{tikzpicture}}}	    
  \newcommand{\AllQSplit}[2]{\SplitX{\forall\;#1\;.}{#2}}
  \newcommand{\ExQSplit}[2]{\SplitX{\exists\;#1\;.}{#2}}
  \newcommand{\propImpSplit}[2]{\SplitX{#1\;\Rightarrow\;}{#2}}
\providecommand{\length}[1]{\lvert#1\rvert}
\newcommand{\trivialN}[1]{\text{trivial}\xspace}
\newcommand{\Rb}{\ensuremath{\mathbb{R}}} 
\newcommand{\Nbn}{\ensuremath{\mathbb{N}_{0}}}
\newcommand{\Rbn}{\ensuremath{\mathbb{R}_{0}^+}}
\newcommand{\twoup}[1]{\ensuremath{2^{#1}}} 
\newcommand{\X}{\ensuremath{X}} 
\newcommand{\Xt}[2]{\ensuremath{\ifthenelse{\isempty{#2}}{X_{I}^{#1}}{X_{I,#2}^{#1}}}} 
\newcommand{\XT}[1]{\ensuremath{\ifthenelse{\isempty{#1}}{X_I}{X_{I,#1}}}} 
\newcommand{\Xk}[2]{\ensuremath{\ifthenelse{\isempty{#2}}{X_{E}^{#1}}{X_{E,#2}^{#1}}}} 
\newcommand{\XK}[1]{\ensuremath{\ifthenelse{\isempty{#1}}{X_{E}}{X_{E,#1}}}} 
\newcommand{\Zk}[2]{\ensuremath{\ifthenelse{\isempty{#2}}{Z_{\TE,#1}}{Z_{#2,\TE,#1}}}}
\newcommand{\x}{\ensuremath{x}}
\newcommand{\Zt}[1]{\ensuremath{\ifthenelse{\isempty{#1}}{Z_t}{Z_{#1,t}}}} 
\newcommand{\ZT}[1]{\ensuremath{\ifthenelse{\isempty{#1}}{Z_T}{Z_{#1,T}}}} 
\newcommand{\ZPit}[1]{\ensuremath{\ifthenelse{\isempty{#1}}{Z_t}{Z_{#1,t}}}} 
\newcommand{\ZPiT}[1]{\ensuremath{\ifthenelse{\isempty{#1}}{Z_T}{Z_{#1,T}}}} 
\newcommand{\ZTit}[1]{\ensuremath{\ifthenelse{\isempty{#1}}{\breve{Z}_t}{\breve{Z}_{#1,t}}}} 
\newcommand{\ZTiT}[1]{\ensuremath{\ifthenelse{\isempty{#1}}{\breve{Z}_T}{\breve{Z}_{#1,T}}}}
\newcommand{\U}{\ensuremath{U}}
\newcommand{\w}{\ensuremath{w}}
\newcommand{\Y}{\ensuremath{Y}} 
\newcommand{\T}{\ensuremath{T}} 
\newcommand{\I}{\ensuremath{\mathbf{i}}}
\renewcommand{\ll}[1]{\ensuremath{|_{[#1]}}}
\newcommand{\f}[1]{\ensuremath{f\ifthenelse{\isempty{#1}}{}{\Tuple{#1}}}} 
\newcommand{\g}[1]{\ensuremath{g\ifthenelse{\isempty{#1}}{}{\Tuple{#1}}}} 
\newcommand{\h}[1]{\ensuremath{h\ifthenelse{\isempty{#1}}{}{\Tuple{#1}}}} 
\newcommand{\fs}[2]{\ensuremath{f_{#2}\ifthenelse{\isempty{#1}}{}{\Tuple{#1}}}} 
\newcommand{\gs}[2]{\ensuremath{g_{#2}\ifthenelse{\isempty{#1}}{}{\Tuple{#1}}}}  
\newcommand{\hs}[2]{\ensuremath{h_{#2}\ifthenelse{\isempty{#1}}{}{\Tuple{#1}}}}  
\newcommand{\lnc}{\ensuremath{l_t}}
\newcommand{\Beh}{\ensuremath{\mathcal{B}}}
\newcommand{\kg}[1]{\ensuremath{\xspace\preceq_{#1}\xspace}}
\newcommand{\hg}[1]{\ensuremath{\xspace\cong_{#1}\xspace}}
\newcommand{\async}{\ensuremath{\wr_{|}}}
\newcommand{\sync}{\ensuremath{\shortparallel}}
\newcommand{\wsync}{\ensuremath{\wr\shortmid}}
\newcommand{\SR}[3]{\ensuremath{\mathfrak{R}_{#1}(#2,#3)}}
\newcommand{\ESn}[1]{\ensuremath{\ifthenelse{\isempty{#1}}{\Sigma^+_{S}}{\Sigma^+_{S,#1}}}}
\newcommand{\BehS}[1]{\ensuremath{\ifthenelse{\isempty{#1}}{\Beh_{S}}{\Beh_{S,#1}}}}
\newcommand{\BehE}[1]{\ensuremath{\ifthenelse{\isempty{#1}}{\Beh_{E}}{\Beh_{E,#1}}}}
\newcommand{\WT}{\ensuremath{W}}
\newcommand{\W}{\ensuremath{W}}
\newcommand{\D}{\ensuremath{D}}
\newcommand{\V}{\ensuremath{V}}
\newcommand{\statemap}[3]{\ifthenelse{\isempty{#2#3}}{\psi_{#1}}{\psi_{#1}(#2,#3)}}
\newcommand{\statemapPi}[3]{\ifthenelse{\isempty{#2#3}}{\varphi_{#1}}{\varphi_{#1}(#2,#3)}}
\newcommand{\statemapTi}[3]{\ifthenelse{\isempty{#2#3}}{\psi_{#1}}{\psi_{#1}(#2,#3)}}
\newcommand{\CONCAT}[4]{#1\wedge^{#2}_{#3}#4}
\newcommand{\Xx}[2]{\ensuremath{\chi_{#1}\ifthenelse{\isempty{#2}}{}{(#2)}}}
\newcommand{\Xxp}[2]{\ensuremath{\overline{\chi}_{#1}\ifthenelse{\isempty{#2}}{}{(#2)}}}
\newcommand{\Xxr}[3]{\ensuremath{\chi_{#1}^{#2}\ifthenelse{\isempty{#3}}{}{(#3)}}}
\newcommand{\Xxrp}[3]{\ensuremath{\overline{\chi}_{#1}^{#2}\ifthenelse{\isempty{#3}}{}{(#3)}}}
\newcommand{\R}{\ensuremath{\mathcal{R}}}
\newcommand{\timescale}{\mathcal{T}}
\newcommand{\timescaleUp}[1]{{#1}}
\newcommand{\timescaleDown}[1]{{#1}^{-1}}
\newcommand{\signalmap}{\phi}
\newcommand{\signalmapR}[1]{\ensuremath{\ifthenelse{\isempty{#1}}{\signalmap_{r}}{\signalmap_{r,#1}}}}
\newcommand{\signalmapP}[1]{\ensuremath{\ifthenelse{\isempty{#1}}{\signalmap_{p}}{\signalmap_{p,#1}}}}
\newcommand{\signalmapT}[1]{\ensuremath{\ifthenelse{\isempty{#1}}{\signalmap_{t}}{\signalmap_{t,#1}}}}
\newcommand{\TE}{\ensuremath{{T_E}}}
\newcommand{\E}{\ensuremath{\Sigma}}
\newcommand{\Ep}[1]{\ensuremath{\Sigma_{#1}^{\signalmap}}}
\newcommand{\EpRhs}{\ensuremath{\Tuple{\T,\TE,\WT,\Gamma,\Beh,\BehE{},\signalmap}}}
\newcommand{\ES}[1]{\ensuremath{\ifthenelse{\isempty{#1}}{\Sigma_{S}}{\Sigma_{S,#1}}}}
\newcommand{\EpS}[1]{\ensuremath{\ifthenelse{\isempty{#1}}{\Ep{S}}{\Ep{S,#1}}}}
\newcommand{\EpSR}[1]{\ensuremath{\ifthenelse{\isempty{#1}}{\Sigma^{\signalmap_{r}}_{S}}{\Sigma^{\signalmap_{r,#1}}_{S,#1}}}}
\newcommand{\EpSP}[1]{\ensuremath{\ifthenelse{\isempty{#1}}{\Sigma^{\signalmap_{p}}_{S}}{\Sigma^{\signalmap_{p,#1}}_{S,#1}}}}
\newcommand{\EpST}[1]{\ensuremath{\ifthenelse{\isempty{#1}}{\Sigma^{\signalmap_{t}}_{S}}{\Sigma^{\signalmap_{t,#1}}_{S,#1}}}}
\newcommand{\EpSRhs}[1]{\ensuremath{(\T_{#1},\allowbreak\TE,\allowbreak\WT_{#1}\nobreak\times\nobreak\X_{#1},\allowbreak\Gamma,\allowbreak\BehS{#1},\allowbreak\BehE{#1},\allowbreak\signalmap_{#1})}}
\newcommand{\EE}[1]{\ensuremath{\ifthenelse{\isempty{#1}}{\Sigma_{E}}{\Sigma_{E,#1}}}}
\newcommand{\ESm}[1]{\ensuremath{\ifthenelse{\isempty{#1}}{\Sigma_{\psi}}{\Sigma_{\psi,#1}}}}
\newcommand{\EplMaxS}[1]{\ensuremath{\ifthenelse{\isempty{#1}}{\Sigma^{\phi,l^\uparrow}_{S}}{\Sigma^{\phi_{#1},l^\uparrow}_{S,#1}}}}
\newcommand{\EplMaxSp}[1]{\ensuremath{\ifthenelse{\isempty{#1}}{\overline{\Sigma}^{\phi,l^\uparrow}_{S}}{\overline{\Sigma}^{\phi_{#1},l^\uparrow}_{S,#1}}}}
\newcommand{\EplncMaxS}[1]{\ensuremath{\ifthenelse{\isempty{#1}}{\Sigma^{\phi,\lnc^\uparrow}_{S}}{\Sigma^{\phi_{#1},\lnc^\uparrow}_{S,#1}}}}
\newcommand{\EplsMaxS}[2]{\ensuremath{\ifthenelse{\isempty{#1}}{\Sigma^{\phi,{#2}^\uparrow}_{S}}{\Sigma^{\phi_{#1},{#2}^\uparrow}_{S,#1}}}}
\newcommand{\ElMaxS}[1]{\ensuremath{\ifthenelse{\isempty{#1}}{\Sigma^{l^\uparrow}_{S}}{\Sigma^{l^\uparrow}_{S,#1}}}}
\newcommand{\ElMaxSp}[1]{\ensuremath{\ifthenelse{\isempty{#1}}{\overline{\Sigma}^{l^\uparrow}_{S}}{\overline{\Sigma}^{l^\uparrow}_{S,#1}}}}
\newcommand{\ElncMaxS}[1]{\ensuremath{\ifthenelse{\isempty{#1}}{\Sigma^{\lnc^\uparrow}_{S}}{\Sigma^{\lnc^\uparrow}_{S,#1}}}}
\newcommand{\ElsMaxS}[2]{\ensuremath{\ifthenelse{\isempty{#1}}{\Sigma^{{#2}^\uparrow}_{S}}{\Sigma^{{#2}^\uparrow}_{S,#1}}}}
\newcommand{\ElE}[1]{\ensuremath{\ifthenelse{\isempty{#1}}{\Sigma^{l}_{E}}{\Sigma^{l}_{E,#1}}}}
\newcommand{\ElEp}[1]{\ensuremath{\ifthenelse{\isempty{#1}}{\Sigma^{l}_{E}}{\overline{\Sigma}^{l}_{E,#1}}}}
\newcommand{\ElncE}[1]{\ensuremath{\ifthenelse{\isempty{#1}}{\Sigma^{\lnc}_{E}}{\Sigma^{\lnc}_{E,#1}}}}
\newcommand{\ElMaxE}[1]{\ensuremath{\ifthenelse{\isempty{#1}}{\Sigma^{l^\uparrow}_{E}}{\Sigma^{l^\uparrow}_{E,#1}}}}
\newcommand{\ElMaxEp}[1]{\ensuremath{\ifthenelse{\isempty{#1}}{\overline{\Sigma}^{l^\uparrow}_{E}}{\overline{\Sigma}^{l^\uparrow}_{E,#1}}}}
\newcommand{\ElncMaxE}[1]{\ensuremath{\ifthenelse{\isempty{#1}}{\Sigma^{\lnc^\uparrow}_{E}}{\Sigma^{\lnc^\uparrow}_{E,#1}}}}
\newcommand{\ElsMaxE}[2]{\ensuremath{\ifthenelse{\isempty{#1}}{\Sigma^{{#2}^\uparrow}_{E}}{\Sigma^{{#2}^\uparrow}_{E,#1}}}}
\newcommand{\Behp}{\ensuremath{\Beh^{\signalmap}}}
\newcommand{\BehpS}[1]{\ensuremath{\ifthenelse{\isempty{#1}}{\Behp_{S}}{\Beh^{\signalmap_{#1}}_{S,#1}}}}
\newcommand{\BehlMaxS}[1]{\ensuremath{\ifthenelse{\isempty{#1}}{\Beh^{l^\uparrow}_{S}}{\Beh^{l^\uparrow}_{S,#1}}}}
\newcommand{\BehlMaxSp}[1]{\ensuremath{\ifthenelse{\isempty{#1}}{\overline{\Beh}^{l^\uparrow}_{S}}{\overline{\Beh}^{l^\uparrow}_{S,#1}}}}
\newcommand{\BehlncMaxS}[1]{\ensuremath{\ifthenelse{\isempty{#1}}{\Beh^{\lnc^\uparrow}_{S}}{\Beh^{\lnc^\uparrow}_{S,#1}}}}
\newcommand{\BehlsMaxS}[2]{\ensuremath{\ifthenelse{\isempty{#1}}{\Beh^{{#2}^\uparrow}_{S}}{\Beh^{{#2}^\uparrow}_{S,#1}}}}
\newcommand{\BehlE}[1]{\ensuremath{\ifthenelse{\isempty{#1}}{\Beh^{l}_{E}}{\Beh^{l}_{E,#1}}}}
\newcommand{\BehlncE}[1]{\ensuremath{\ifthenelse{\isempty{#1}}{\Beh^{\lnc}_{E}}{\Beh^{\lnc}_{E,#1}}}}
\newcommand{\BehlMaxE}[1]{\ensuremath{\ifthenelse{\isempty{#1}}{\Beh^{l^\uparrow}_{E}}{\Beh^{l^\uparrow}_{E,#1}}}}
\newcommand{\BehlMaxEp}[1]{\ensuremath{\ifthenelse{\isempty{#1}}{\overline{\Beh}^{l^\uparrow}_{E}}{\overline{\Beh}^{l^\uparrow}_{E,#1}}}}
\newcommand{\BehlncMaxE}[1]{\ensuremath{\ifthenelse{\isempty{#1}}{\Beh^{\lnc^\uparrow}_{E}}{\Beh^{\lnc^\uparrow}_{E,#1}}}}
\newcommand{\BehlsMaxE}[2]{\ensuremath{\ifthenelse{\isempty{#1}}{\Beh^{{#2}^\uparrow}_{E}}{\Beh^{{#2}^\uparrow}_{E,#1}}}}
\newcommand{\BehVlMaxS}[1]{\ensuremath{\ifthenelse{\isempty{#1}}{\projState{\V}{\Beh^{l^\uparrow}_{S}}}{\projState{\V}{\Beh^{l^\uparrow}_{S,#1}}}}}
\newcommand{\BehDlMaxS}[1]{\ensuremath{\ifthenelse{\isempty{#1}}{\projState{\D}{\Beh^{l^\uparrow}_{S}}}{\projState{\D}{\Beh^{l^\uparrow}_{S,#1}}}}}
\newcommand{\EoMaxS}[1]{\ensuremath{\ifthenelse{\isempty{#1}}{\Sigma^{1^\uparrow}_{S}}{\Sigma^{1^\uparrow}_{S,#1}}}}
\newcommand{\projState}[2]{\pi_{#1}(#2)}
\newcommand{\dom}[1]{\ensuremath{\mathrm{dom(#1)}}}
\title{\LARGE \bf
Simulation and Bisimulation over Multiple Time Scales\\ in a Behavioral Setting
}
\author{Anne-Kathrin Schmuck and Jörg Raisch
\thanks{A.-K. Schmuck and J. Raisch are with the Control Systems Group, Technical University of Berlin, Germany. J. Raisch is also with the Max Planck Institute
for Dynamics of Complex Technical Systems, Magdeburg, Germany. {\tt\small \{a.schmuck,raisch\}@control.tu-berlin.de}}
}
\begin{document}

\maketitle
\thispagestyle{empty}
\pagestyle{empty}

 \begin{abstract}
\noindent This paper introduces a new behavioral system model with distinct external and internal signals possibly evolving on different time scales. This allows to capture abstraction processes or signal aggregation in the context of control and verification of large scale systems. For this new system model different notions of simulation and bisimulation are derived, ensuring that they are, respectively, preorders and equivalence relations for the system class under consideration.\\
These relations can capture a wide selection of similarity notions available in the literature. This paper therefore provides a suitable framework for their comparison.\\
\end{abstract}

\section{Introduction}
State explosion is a very common problem in the control of large scale systems due to the interconnection of numerous subsystems. Therefore, it is usually desired to reduce the state space of subsystems while overapproximating or preserving their external behavior important for their interconnection to surrounding components.\\
This mechanism is also used to reduce the complexity of verification problems in the theoretical computer science community. Here, systems are usually modeled by so called transition systems, a subclass of discrete time state space models. For these models, the notion of bisimilarity plays an important role. This concept was introduced by Milner \cite{Milner1989} in the context of concurrent processes to describe how state trajectories of two transition systems mimic each other while producing the same ``external'' behavior, i.e., using the same transition symbols. If such a bisimulation relation exists, it was shown that many interesting properties expressible in temporal logics, in particular reachability, are preserved when replacing a system by a bisimilar one.\\
The use of bisimulation relations for other system models was discussed in the survey paper \cite{AlurHenzingerLaffarrierePappas2000}. Here, special classes of hybrid systems are rewritten into a transition system and it was shown that they allow for purely discrete abstractions bisimilar to the constructed transition system. Pappas \cite{Pappas2003} adapted this method for linear time-invariant continuous state space models with finite observation maps, still using both a rewriting and an abstraction step.
To remove the rewriting step, van der Schaft \cite{Schaft2004} introduced a notion of bisimulation directly applicable to continuous systems. He showed that this equivalence interpretation unifies the concepts of state space equivalence and reduction using controlled invariant subspaces. 
These results where generalized by van der Schaft and coworkers to hybrid systems \cite{Schaft2004Hyb}, switched linear systems \cite{PolaSchaft2006} and behavioral systems 
\cite{JuliusSchaft2005}.\\
Recently, Davoren and Tabuada \cite{DavorenTabuada2007} presented simulation 
and bisimulation relations using general flow systems \cite{Davoren2004}, preserving properties formulated in the so called general flow logic \cite{Davoren2004}. General flow systems are able to model continuous, discrete, hybrid or even "meta-hybrid" autonomous state dynamics also allowing equivalence relations between systems with different time scales. This feature extends all previous approaches where only relations between systems with unique time scales are possible. Although Davoren and Moor discussed in \cite{DavorenMoor2006} how general flow systems can be equipped with input and output maps,
the simulation relations in \cite{DavorenTabuada2007} do not incorporate the feature of ensuring identical external signals of bisimilar systems. In \cite{CuijpersReniers2008} a comparison between simulation relations on transition systems and  simulation relations  on general flow systems is presented.\\
Tabuada and coworkers extended the work of Alur et.al. \cite{AlurHenzingerLaffarrierePappas2000} towards finite state abstraction methods ensuring similarity or bisimilarity between the original and the abstracted system \cite{TabuadaPappas2003b,TabuadaPappas2003,Tabuada2006b,Tabuada2008,GirardPolaTabuada2010,TabuadaBook}. 
Independently from this work, the notion of $l$-complete abstraction \cite{MoorRaisch1999} evolved as a discrete abstraction technique in the framework of behavioral systems theory \cite{Willems1989}. %
In both frameworks a finite state abstraction of a possibly continuous or hybrid dynamical system is obtained if the external signal space is finite and the trajectories of external signals evolve on the discrete time axis $\Nbn$.
In the context of bisimilarity relations, these external signals should be preserved during abstraction.
This raises the problem of deriving a bisimilarity notion that ensures equivalence of discrete external signals while comparing state trajectories that evolve on possibly continuous or hybrid time lines. 
This issue has up until now not been explicitly addressed, neither in the context of $l$-complete approximations nor in the work by Tabuada and coworkers. In the latter, as in \cite{AlurHenzingerLaffarrierePappas2000} and \cite{Pappas2003}, the original system is first rewritten into a transition system, previous to the abstraction step. The bisimulation relation is then only ensured to hold between the transition system and its abstraction.\\
To also incorporate the rewriting step into the exploration of equivalence, we introduce a system model with distinct external and internal signals possibly evolving on a different time axis in Section \ref{sec:sys}. To cover a very general class of systems, we use behavioral systems theory \cite{Willems1989} to formalize our notion. We note that this restricts each time axis to be either continuous or discrete. It is future research to also incorporate hybrid time scales for the internal signals as formalized, for example, in \cite{DavorenMoor2006}.
Inspired by the the work in \cite{JuliusSchaft2005,JuliusPhdThesis2005} and \cite{DavorenTabuada2007}, we derive a simulation relation for the newly introduced system model in Section \ref{sec:SimRel}. We show that the introduced simulation and bisimulation relations are preorders and equivalence relations, respectively, for the system class under consideration.\\ 
This work is a first step towards the comparison of different existing approaches to construct (bi)similar finite state abstractions. Due to page limitations this comparison is only shortly touched in various remarks and will be explored in more detail in subsequent publications.\\

\section{Preliminaries}\label{sec:prelim}

A \textit{dynamical system} is given by $\E=\Tuple{\T,\WT,\Beh}$, consisting of the right-unbounded time axis $\T\subseteq\Rb$, the signal space $\WT$ and the behavior of the system $\Beh\subseteq\WT^\T$, where
$\WT^\T\deff\SetComp{w}{w:\T\fun\WT}$ is the set of all 
\textit{signals} evolving on $\T$ and
taking values in $\WT$. 
Slightly abusing notation, we also write  $v\in\WT^\T$ if $v:\T\parfun\WT$ is a \textit{partial function}. This is understood to be shorthand for $v\in\WT^{\dom{v}}$, where $\dom{v}=\SetComp{t\in\T}{v(t)\text{ is defined}}$ is the \textit{domain} of $v$. 
Furthermore, $\I:\T\fun \T$ is the \textit{identity map} s.t.\footnote{Throughout this paper we use the notation "$\AllQ{}{}$", meaning that all statements after the dot hold for all variables in front of the dot. "$\ExQ{}{}$" is interpreted analogously. } $\AllQ{t\in\T}{\I(t)=t}$.
Now let $\WT=\WT_1\times\WT_2$ be a product space. Then the \textit{projection} of a signal $w\in\WT^\T$ to $\WT_1^\T$ is given by $\projState{\WT_1}{w}\deff\SetComp{w_1\in\WT_1^\T}{\ExQ{w_2\in\WT_2^\T}{w=\Tuple{w_1,w_2}}}$ and $\projState{\WT_1}{\Beh}$ denotes the projection of all trajectories in the behavior.  
Given two signals $w_1,w_2\in\WT^\T$ and two points in time $t_1,t_2\in\T$, the \textit{concatenation}  $w_3=\CONCAT{w_1}{t_1}{t_2}{w_2}$ is given by
\begin{equation}\label{equ:concat}
\AllQ{t\in\T}{w_3(t)=\DiCases{w_1(t)}{t< t_1}{w_2(t-t_1+t_2)}{t\geq t_1}},
\end{equation}
where we denote $\CONCAT{\cdot}{t}{t}{\cdot}$ by $\CONCAT{\cdot}{}{t}{\cdot}$.\\

\section{$\signalmap$~-~Dynamical Systems}\label{sec:sys}

When reasoning about similarity and bisimilarity of systems one has to distinguish between  ``external'' signals, which are required to match or satisfy an inclusion property, and the remaining ``internal'' signals. Depending on the chosen system representation and/or the real world problem at hand, this distinction may differ. To incorporate a wide range of possibilities, we define a so called $\signalmap$-dynamical system, where $\signalmap$ is a set-valued map which describes the relation between internal and external signals.\\

\begin{definition}\label{def:DynSysInducesFiniteV}
Let $\E=\Tuple{\T,\WT,\Beh}$ be a dynamical system.
Then $\Ep{}=\EpRhs$ is a \textbf{$\signalmap$-dynamical system} if
\begin{equation*}
 \signalmap:\Beh\fun\twoup{\Gamma^\TE\times\timescale}
\end{equation*}
where $\Gamma$ is an external signal space, $\TE\subseteq\T$ is a right-unbounded time axis, 
\[\timescale=\SetCompX{\timescaleUp{\tau}:\T\parfun\TE}{
 \SplitS{\timescaleUp{\tau} \text{ is surjective and}}{\text{monotonically increasing}}}\] 
 is a set of time scale transformations and 
 \begin{equation}\label{equ:BehE}
  \BehE{}=\SetCompX{\gamma\in\Gamma^\TE}{
 \ExQ{w\in\Beh,\tau\in\timescale}{\Tuple{\gamma,\tau}\in\signalmap(\w)}}
 \end{equation}
 is the external behavior. Furthermore, $\timescaleDown{\tau}:\TE\fun\twoup{\T}$ denotes the inverse time scale transformation\footnote{If $\AllQ{k\in\TE}{\length{\timescaleDown{\tau}(k)}=1}$, by slightly abusing notation, we denote the unique element $t_k\in\timescaleDown{\tau}(k)$ by $\timescaleDown{\tau}(k)$ itself and write $t_k=\timescaleDown{\tau}(k)$.}, i.e., $\timescaleDown{\tau}(k)=\SetComp{t\in\T}{\timescaleUp{\tau}(t)=k}$.
\end{definition}

\begin{remark}
The construction of $\signalmap$ in \REFdef{def:DynSysInducesFiniteV} was inspired by the deterministic map in \cite[Def. 12]{MoorRaischDavoren2003}. Note, that the map in \cite[Def. 12]{MoorRaischDavoren2003} is required to be strictly causal. 
In analogy, one would typically require that the map $\signalmap$ is non-anticipating, i.e., 
 \begin{equation*}\label{equ:rem:causal}
 \AllQSplit{w,w'\in\Beh,\gamma,\gamma'\in\Gamma^\TE,\tau,\tau'\in\timescale,t\in\T}{
\propImp{
\begin{propConjA}
\Tuple{\gamma,\tau}\in\signalmap(\w)\\
\Tuple{\gamma',\tau'}\in\signalmap(\w')\\
\w\ll{0,t}=\w'\ll{0,t}
\end{propConjA}
}{
\ExQ{\tilde{\gamma}\in\Gamma^\TE, \tilde{\tau}\in\timescale}{
\begin{propConjA}
\Tuple{\tilde{\gamma},\tilde{\tau}}\in\signalmap(\w')\\
 \tau\ll{0,t}=\tilde{\tau}\ll{0,t}\\
\gamma\ll{0,\timescaleUp{\tau}(t)}=\tilde{\gamma}\ll{0,\timescaleUp{\tau}'(t)}
\end{propConjA}.
}
}}
\end{equation*}
In words: if we change the future of $w$, the past and present of both $\gamma$ and $\tau$ are allowed to remain unaffected.
\end{remark}

Using this concept, systems with single time axis, i.e., $\T=\TE$, as well as systems with multiple time axes, i.e., $\T\neq\TE$ can be described in a unified fashion.\\
%
 As outlined in the introduction, a large portion of research on simulation relations in the control systems community uses a single time scale. In this context, the signals that are externally visible ``live'' in a subspace of the signal space $\W$. Capturing these models in our framework leads to an identity time scale transformation and a signal map $\signalmap$ projecting signals $w\in\WT^\T$ to the externally visible subspace $\Gamma$.\\
\begin{remark}\label{rem:0}
  Consider a dynamical system ${\E=\Tuple{\T,\WT,\Beh}}$ with $\T=\Nbn$ and $\W=\U\times\Y$, where $\U$ is the set of inputs and $\Y$ is the set of outputs. With a special choice of $\Beh$, this model can capture the dynamics of a transition system as used by Pappas and Tabuada, e.g., in \cite{Pappas2003,TabuadaPappas2003b}. There it is assumed that the inputs are chosen and only the output signals are required to be (bi)simulated by a related system. This can be expressed by a $\signalmap$-dynamical system by choosing $\TE=\Nbn$,  ${\Gamma=\Y}$ and $\AllQ{{\Tuple{u,y}\in\W^\T}}{\signalmap(\Tuple{u,y})=\Set{\Tuple{y,\I}}}$.\\
 Analogously, using $\T=\Rbn$ and $\W=\U\times\Y\times\D$, where $\D$ is the disturbance space, we can construct $\Beh$ such that $\E$ captures the dynamics of the linear time invariant system used by van der Schaft in \cite{Schaft2004}. There, the inputs and outputs are required to match for bisimilar systems. This can be expressed by a $\signalmap$-dynamical system by choosing $\TE=\Rbn$,  ${\Gamma=\U\times\Y}$ and $\AllQ{{\Tuple{u,y,d}\in\W^\T}}{\signalmap(\Tuple{u,y,d})=\Set{\Tuple{\Tuple{u,y},\I}}}$.
\end{remark}

In contrast to the cases described in Remark~\ref{rem:0}, the construction of a $\signalmap$-dynamical system with $\T\neq\TE$ is not as straightforward and therefore illustrated by an example.\\

\begin{example}\label{exp:1}
 Consider a dynamical system ${\E=\Tuple{\T,\WT,\Beh}}$ with $\T=\Rbn$, $\WT=\INTERSECT{\Rb}{[0,40]}$ and $w\in\Beh$ iff $w$ is continuouse.
 Using $\TE=\Nbn$, $\Gamma=\Set{q_1,q_2,q_3,q_4}$ and the sets
 \begin{align*}
  &I_{q_1}=[0,11),&I_{q_2}=(9,21),\\
  &I_{q_3}=(19,31),&I_{q_4}=(29,40],
 \end{align*}
 the external signals $\gamma\in\BehE{}$ are constructed 
 via the discretization $\mathfrak{d}:W\fun\twoup{\Gamma}$ \SUCHTHAT 
 \[\propAequ{q_i\in \mathfrak{d}(\nu)}{\nu\in I_{qi}}.\] %
So far, this discretization does not include any information about its timing, i.e., the formal construction of $\signalmap$. Out of the many different options, we discuss two possible maps $\signalmap_a$ and $\signalmap_b$ as depicted in Figure~\ref{fig:timescale1} and Figure~\ref{fig:timescale2}.\\
First, consider a signal map  $\signalmap_a$ s.t.
for all $ \gamma\in\Gamma^\TE,\tau_a\in\timescale$ and $\w\in\Beh$, it holds that $\Tuple{\gamma,\tau_a}\in\signalmap_a(\w)$ \IFF
\begin{equation*}
 \gamma(0)\in \mathfrak{d}(w(0)),\quad\tau_a^{-1}(0)=\Set{0}
\end{equation*}
and for all $k\in\TE,k>0$, 
\begin{align}
 \tau_a^{-1}(k)&=\left\{\mathrm{glb}\SetCompX{t\geq\tau_a^{-1}(k-1)}{w(t)\notin\mathfrak{d}^{-1}(\gamma(k-1))}\right\}\notag\\
 \gamma(k)&\in \mathfrak{d}(w(\tau_a^{-1}(k))),\label{equ:exp:1:1}
\end{align}
where $\mathrm{glb}$ denotes the greatest lower bound and ${\mathfrak{d}^{-1}(q_i)=I_{qi}}$. This generates the \textbf{point to point time scale transformation} depicted in Figure~\ref{fig:timescale1} (middle), where different points in $\dom{\tau_a}$ are mapped to different points in $\TE$, and an external event is triggered when leaving the interval. The generated external signal $\gamma$ is depicted in Figure~\ref{fig:timescale2}. This map $\signalmap_a$ can be extended to generate a \textbf{set to point time scale transformation} by defining
\begin{equation}\label{equ:exp:1:2}
 \tau_b^{-1}(k)=\left[\tau_a^{-1}(k),\tau_a^{-1}(k+1)\right),
\end{equation}
where every point in $\T$ is in the domain of $\tau_b$. This time scale transformation is depicted in Figure~\ref{fig:timescale1} (bottom). Combining the construction of $\tau_b$ \eqref{equ:exp:1:2} with the construction of $\gamma$ in \eqref{equ:exp:1:1} defines a signal map $\signalmap_b$.\\
Now assume, that we have a signal $\tilde{w}\in\Beh$ that stays in $I_{q_1}$ for all $t$. This signal would only generate one external event $q_1$ at time $0$ but not an infinite sequence of events $\gamma\in\Gamma^\TE$, where $\TE$ is right unbounded. Therefore, the signal maps $\signalmap_a$ and $\signalmap_b$ map $\tilde{w}$ to the empty set.
Obviously, one could repeat the symbol $q_1$ infinitely often to generate a signal in $\gamma\in\Gamma^\TE$ from $\tilde{w}$. However, if one has to know that $w$ will never leave $I_{q_1}$ to do so, as suggested in \cite[Def.7.2]{TabuadaBook}, this generates an anticipating signal map. A non-anticipating version is, for example, obtained, if a symbol is repeated after a fixed time $t_\mathfrak{d}$, if the quantization interval is not left. This would combine event triggered with slow time triggered discretization.
\end{example}
 \begin{figure}[htb]
 \begin{center}
 \begin{tikzpicture}[scale=1]
 \begin{pgfonlayer}{background}
   \path      (0,0) node (o) {
      \includegraphics[width=0.4\linewidth]{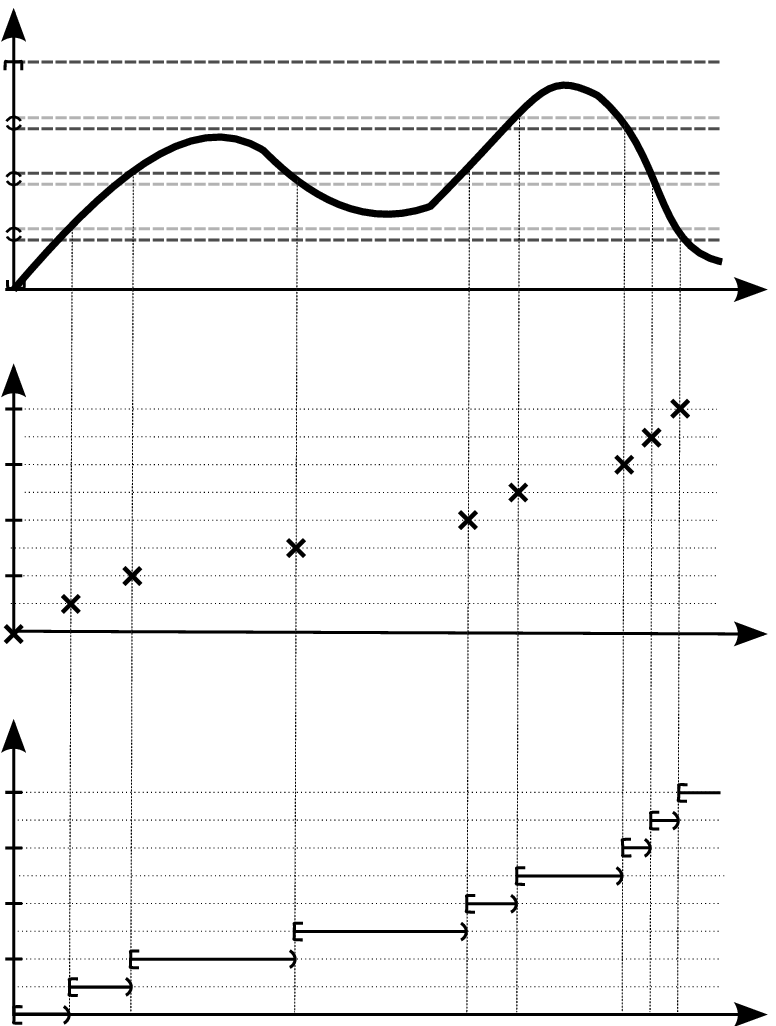}};
 \end{pgfonlayer}
 \begin{pgfonlayer}{foreground}
  \path (o.north west)+(-0.1,-0.2) node {$\W$};
  \path (o.north west)+(-0.1,-0.7) node {$40$};
  \path (o.north west)+(-0.1,-1.2) node {$30$};
  \path (o.north west)+(-0.1,-1.7) node {$20$};
  \path (o.north west)+(-0.1,-2.2) node {$10$};
  \path (o.north west)+(0,-2.7) node (b) {$0$};  
   \path (o.north east)+(0.2,-2.3) node {$w\in\W^\T$};  
   \path (o.north east)+(-0.1,-3.1) node (t2) {$T$};
   \path (b)+(0,-0.7) node {$\TE$};
   \path (b)+(0,-1.2) node {$8$};    
   \path (b)+(0,-1.7) node {$6$};  
   \path (b)+(0,-2.2) node {$4$}; 
   \path (b)+(0,-2.7) node {$2$}; 
   \path (b)+(0,-3.2) node (c) {$0$};
   \path (t2)+(0.3,-0.9) node  {$\tau_a\in\TE^\T$};  
   \path (t2)+(0,-3.2) node  (t3) {$T$};   
   \path (c)+(0,-0.8) node {$\TE$};
   \path (c)+(0,-1.5) node {$8$};    
   \path (c)+(0,-2.1) node {$6$};  
   \path (c)+(0,-2.6) node {$4$}; 
   \path (c)+(0,-3.1) node {$2$}; 
   \path (c)+(0,-3.6) node (c) {$0$};
   \path (t3)+(0.3,-1.2) node  {$\tau_b\in\TE^\T$};  
   \path (t3)+(0,-3.5) node  {$T$};     
 \end{pgfonlayer}
 \end{tikzpicture}
  \vspace{-0.5cm}
 \caption{Illustration of point to point ($\tau_a$) and set to point ($\tau_b$) time scale transformations as constructed in Example~\ref{exp:1}.}\label{fig:timescale1}
 \end{center}
 \end{figure}
\begin{figure}[htb!]
\begin{center}
 \begin{tikzpicture}[scale=1]
 \begin{pgfonlayer}{background}
   \path      (0,0) node (o) {
      \includegraphics[width=0.45\linewidth]{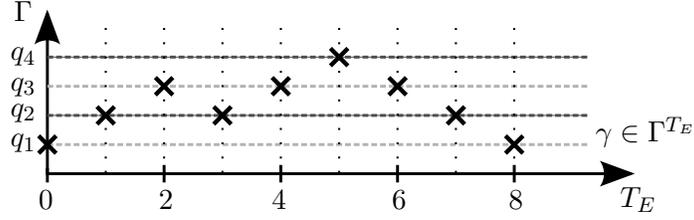}};
 \end{pgfonlayer}
 \begin{pgfonlayer}{foreground}
  \path (o.north west)+(0,-0.2) node (a) {$\Gamma$};
  \path (o.north west)+(0,-0.8) node (b) {$q_4$};
  \path (o.north west)+(0,-1.2) node (b) {$q_3$};
  \path (o.north west)+(0,-1.55) node (b) {$q_2$};
  \path (o.north west)+(0,-1.95) node (b) {$q_1$};  
    \path (o.north east)+(-0.1,-2.7) node (t1) {$\TE$};  
    \path (o.north east)+(0,-1.8) node (t1) {$\gamma\in\Gamma^{\TE}$};
  \path (o.north west)+(0.3,-2.7) node (b) {$0$}; 
  \path (o.north west)+(1.9,-2.7) node (b) {$2$}; 
  \path (o.north west)+(3.4,-2.7) node (b) {$4$};
  \path (o.north west)+(5,-2.7) node (b) {$6$};
  \path (o.north west)+(6.55,-2.7) node (b) {$8$};
 \end{pgfonlayer}
 \end{tikzpicture}
   \vspace{-0.5cm}
 \caption{Illustration of the external signal constructed using event triggered discretization in Example~\ref{exp:1} corresponding to the internal signal depicted in Figure~\ref{fig:timescale1} (top).}\label{fig:timescale2}
 \end{center}
 \end{figure}

 Example~\ref{exp:1} shows that in general $\tau\in\timescale$ is indeed a function of $\w\in\Beh$ when using an event-triggered discretization scheme. Of course, using time-triggered discretization would result in a unique time scale transformation independent from $\w$. Furthermore, the signal maps used in Example~\ref{exp:1} are deterministic in the sense that every signal $w\in\Beh$ generates a one element set 
 or the empty set. However, nondeterministic maps occur for example if $\signalmap$ is constructed from a cover of $\W$ with overlaps of more than two sets.\\

\section{State Space $\signalmap$~-~Dynamical Systems}\label{sec:sysSS}

States are internal variables for which the axiom of state holds, i.e., all relevant information on the past of the system is captured by those variables. In the literature two concepts of the state property exist for behavioral systems. Firstly, the well known version by Willems \cite{Willems1989,Willems1991}, where state trajectories $x_1$ and $x_2$ can be concatenated, if they exhibit the same value at the same time (i.e., 
$\AllQ{x_1,x_2\text{ and }t\in\T}{\propImp{x_1(t)=x_2(t)}{x=\CONCAT{x_1}{}{t}{x_2}}}$ is also a state trajectory). 
And secondly, a generalized version that allows state trajectories to be also concatenated if they reach the same value at different times (i.e., 
$\AllQ{x_1,x_2\text{ and }t_1,t_2\in\T}{\propImp{x_1(t_1)=x_2(t_2)}{x=\CONCAT{x_1}{t_1}{t_2}{x_2}}}$ is also a state trajectory), as used in the context of state maps by Julius and van der Schaft in \cite{JuliusSchaft2005,JuliusPhdThesis2005}. To clearly differentiate both notions we call the first one synchronous and the second one asynchronous. 
Using these two state properties, we construct state space $\signalmap$-dynamical systems such that the discussed state property is preserved by the signal map $\signalmap$.\\

\begin{definition}\label{def:SSDynSysInducesFiniteV}
Let $\Ep{}=\EpRhs$ be a  $\signalmap$-dynamical system, $\X$ be a set and $\BehS{}\subseteq(\WT\times\X)^\T$. \\
Then $\EpS{}=\EpSRhs{}$ is an \textbf{asynchronous state space $\signalmap$-dynamical system} if
\begin{equation}\label{equ:def:SSDynSysInducesFiniteV}
   \AllQSplit{\Tuple{w_1,x_1}\in\BehS{},\Tuple{w_2,x_2}\in\BehS{},t_1,t_2\in\T,\Tuple{\gamma_2,\tau_2}\in\signalmap(w_2),\Tuple{\gamma_1,\tau_1}\in\signalmap(w_1),k_1,k_2\in\TE}{\propImp{
   \begin{propConjA}
    x_1(t_1)=x_2(t_2)\\
    k_1=\timescaleUp{\tau_1}(t_1)\\
    k_2=\timescaleUp{\tau_2}(t_2)
   \end{propConjA}}{
\begin{propConjA}
 \CONCAT{\Tuple{w_1,x_1}}{t_1}{t_2}{\Tuple{w_2,x_2}}\in\BehS{}\\
\Tuple{\CONCAT{\gamma_1}{k_1}{k_2}{\gamma_2},\CONCAT{\tau_1}{t_1}{t_2}{\BR{\tau_2+c}}}\in\signalmap(\CONCAT{w_1}{t_1}{t_2}{w_2})
\end{propConjA},}}
\end{equation}
where $\AllQ{t\in\T}{c(t)=k_1-k_2}$.
Furthermore, $\EpS{}$ is an \textbf{externally synchronous} state space $\signalmap$-dynamical system if \eqref{equ:def:SSDynSysInducesFiniteV} holds for $k=k_1=k_2$
and a \textbf{synchronous} state space $\signalmap$-dynamical system if \eqref{equ:def:SSDynSysInducesFiniteV} holds for $t=t_1=t_2$ and $k=k_1=k_2$.
\end{definition}

It is easy to see that every asynchronous state space $\signalmap$-dynamical system is also an externally synchronous and a synchronous one, because we can always pick $k=k_1=k_2$ and $t=t_1=t_2$ in \eqref{equ:def:SSDynSysInducesFiniteV}. With the same argument, every externally synchronous state space $\signalmap$-dynamical system is also a synchronous one. 
For the asynchronous and the synchronous case in \REFdef{def:SSDynSysInducesFiniteV}, the implication  $\propImp{x_1(t_1)=x_2(t_2)}{ \CONCAT{\Tuple{w_1,x_1}}{t_1}{t_2}{\Tuple{w_2,x_2}}\in\BehS{}}$ is equivalent to the asynchronous and for $t=t_1=t_2$ to the synchronous state property for the system $\ES{}=\Tuple{\T,\W\times\X,\BehS{}}$. 
The additional requirement in \eqref{equ:def:SSDynSysInducesFiniteV} ensures, that this concatenation property also holds for the external behavior.
Note that for the externally synchronous case, synchronization is only required on the external time axis.\\ 
In the remainder of this paper, we refer to a system as introduced in  \REFdef{def:SSDynSysInducesFiniteV} simply as \textit{state space $\signalmap$-dynamical system}, if the respective adjective (asynchronous, externally synchronous, synchronous) is irrelevant.\\
Since possibly not all states are reachable by a state trajectory in $\projState{\X}{\BehS{}}$, we define the following reachable subsets of the state space (comp.\cite[Def.5.37]{JuliusPhdThesis2005}).\\

\begin{definition}\label{def:TimeIndexedStateSpace1}
Let $\EpS{}=\EpSRhs{}$ be a state space $\signalmap$-dynamical system. Then 
\begin{align*}
 &\XT{}\deff\bigcup_{t\in\T}\Xt{t}{}\quad\text{and}\quad \XK{}\deff\bigcup_{k\in\T_E}\Xk{k}{}\quad	\text{s.t.}\\
 &\Xt{t}{}\deff\SetCompX{\xi\in\X}{\ExQ{\Tuple{w,x}\in\BehS{}}{x(t)=\xi}} \quad \text{and}\\
 &\Xk{k}{}\deff\SetCompX{\xi\in\X}{
 \begin{small}
  \ExQSplit{\Tuple{w,x}\in\BehS{},\Tuple{\gamma,\tau}\in\signalmap(w),t\in\timescaleDown{\tau}(k)}{x(t)=\xi}
 \end{small} }
\end{align*}
are the internal and external time-indexed state spaces $\XT{}\subseteq\X$ and $\XK{}\subseteq\X$, respectively.
\end{definition}
Obviously, the internal and external time-indexed state spaces are equivalent if $\timescaleUp{\tau}$ is a total function.\\

\section{Simulation Relations}\label{sec:SimRel}

One system simulates another one, if its external behavior contains the external behavior
of the latter, while ensuring that the state trajectories generated by both systems only visit states, at each instant of time, that are associated by a relation. 
To formalize this property, a special relation, called \textit{simulation relation}, is constructed between both state spaces.\\
In the behavioral framework signals are usually  right-unbounded. It is well known that a local (i.e., on a finite time interval) evaluation of properties is only possible, if the system is complete \cite{Willems1989}. 
Inspired by \cite[Def. 5.21]{JuliusPhdThesis2005}, we therefore define a concatenation based simulation relation for $\signalmap$-dynamical systems. In contrast to the locally defined simulation relation used for transition systems (e.g., in \cite{Pappas2003,TabuadaPappas2003b}) or general flow systems (in \cite{DavorenTabuada2007}), it also relates not necessarily complete systems.\\

\begin{definition}\label{def:SimRel}
 Let $\EpS{1}=\EpSRhs{1}$ and $\EpS{2}=\EpSRhs{2}$ be state space $\signalmap$-dynamical systems.\\ 
Then a relation $\R\subseteq\X_1\times\X_2$ is an \textbf{asynchronous simulation relation} from $\EpS{1}$ to $\EpS{2}$  (written $\R\in\SR{\async}{\EpS{1}}{\EpS{2}}$) , i.e., $\EpS{2}$ asynchronously simulates $\EpS{1}$, if
\begin{subequations}\label{equ:def:SimRel}
 \begin{equation}\label{equ:def:SimRel:a}
  \AllQ{\xi_1\in\XK{1}}{\ExQ*{\xi_2\in\XK{2}}{\Tuple{\xi_1,\xi_2}\in\R}}
 \end{equation}
and
  \begin{small}
 \begin{equation}\label{equ:def:SimRel:b}
 \AllQSplit{\Tuple{w_1,x_1}\in\BehS{1},\Tuple{w',x'}\in\BehS{2}, \Tuple{\gamma_1,\tau_1}\in\signalmap_1(w_1),\Tuple{\gamma',\tau'}\in\signalmap_2(w'), t_1\in\T_1,t_2\in\T_2, k_1,k_2\in\T_E}{
\propImp{
\begin{propConjA}
\Tuple{x_1(t_1),x'(t_2)}\in\R\\
k_1=\timescaleUp{\tau_1}(t_1)\\
k_2=\timescaleUp{\tau'}(t_2)
\end{propConjA}
}{
\ExQSplit{\Tuple{w_2,x_2}\in\BehS{2}, \Tuple{\gamma_2,\tau_2}\in\signalmap_2(w_2)}{
\begin{propConjA}
\gamma_2=\CONCAT{\gamma'}{k_2}{k_1}{\gamma_1}\\
\AllQ{t\in\T_2, t< t_2}{
\begin{propConjA}
w_2(t)=w'(t)\\
x_2(t)=x'(t)\\
\tau_2(t)=\tau'(t)\\
\end{propConjA}
}\\
x_2(t_2)=x'(t_2)\\
\AllQSplit{k\geq k_2,t_1'\in\timescaleDown{\tau_1}(k-k_2+k_1),t_1'>t_1}{\ExQ{t_2'\in\timescaleDown{\tau_2}(k),t_2'>t_2 }{
\Tuple{x_1(t_1'),x_2(t_2')}\in\R}}
\end{propConjA}.
}
}} 
\end{equation}
  \end{small}
\end{subequations}
It is an \textbf{externally synchronous simulation relation} from $\EpS{1}$ to $\EpS{2}$  (written $\R\in\SR{\wsync}{\EpS{1}}{\EpS{2}}$) if
\begin{subequations}\label{equ:def:SimRelSync}
\begin{equation}\label{equ:def:SimRelSync:a}
  \AllQ{k\in\TE,\xi_1\in\Xk{k}{1}}{\ExQ*{\xi_2\in\Xk{k}{2}}{\Tuple{\xi_1,\xi_2}\in\R}}
 \end{equation}
 \end{subequations}
and \eqref{equ:def:SimRel:b} holds for $k=k_1=k_2$.\\
Furthermore, if $\T=\T_1=\T_2$, then $\R$ is a \textbf{ synchronous simulation relation} from $\EpS{1}$ to $\EpS{2}$ (written $\R\in\SR{\sync}{\EpS{1}}{\EpS{2}}$) if 
\begin{subequations}\label{equ:def:SimRelSync:2}
\begin{equation}\label{equ:def:SimRelSync:2:a}
  \AllQ{t\in\T,\xi_1\in\Xt{t}{1}}{\ExQ*{\xi_2\in\Xt{t}{2}}{\Tuple{\xi_1,\xi_2}\in\R}}
 \end{equation}
 \end{subequations}
and \eqref{equ:def:SimRel:b} holds for $k=k_1=k_2$ and $t=t_1=t_2$.
\end{definition}

\begin{remark}
The construction of the externally synchronous simulation relation in \REFdef{def:SimRel} is inspired by the so called \textit{synchronized simulation relation} defined in \cite[Def. 5.38]{JuliusPhdThesis2005}. However, the latter does not restrict \eqref{equ:def:SimRel:b} to hold only for $k=k_1=k_2$. 
\end{remark}

The intuitive interpretation of the terms asynchronous, synchronous and externally synchronous is strongly related to the ones used in \REFdef{def:SSDynSysInducesFiniteV}. However, in \REFdef{def:SimRel} the synchronization takes place between signals of \textit{different} systems that are related. 
\\
In contrast to \REFdef{def:SSDynSysInducesFiniteV}, it is not true that every asynchronous simulation relation is an (externally) synchronous one, since \eqref{equ:def:SimRel:a} does generally not imply \eqref{equ:def:SimRelSync:a} and \eqref{equ:def:SimRelSync:2:a}.  
Intuitively, if $\R$ is an asynchronous simulation relation, we know that \eqref{equ:def:SimRel:b} holds for $t=t_1=t_2$ and $k=k_1=k_2$. However, we can generally not ensure that for every state in $\XK{1}$ reachable at external time $k$ and internal time $t$, there exists a related state in $\XK{2}$ that is reachable at the same external and internal time. We can therefore possibly not relate the whole state space in a synchronous or externally synchronous fashion, implying that  $\R$ may formally not be an (externally) synchronous simulation relation.\\
To generate some intuition for the simulation relation constructed in \REFdef{def:SimRel}, we will discuss \eqref{equ:def:SimRel:b} using some graphical illustrations. For this purpose assume that we have signals $\Tuple{w_1,x_1}\in\BehS{1},\Tuple{w',x'}\in\BehS{2}, \Tuple{\gamma_1,\tau_1}\in\signalmap_1(w_1)$ and $\Tuple{\gamma',\tau'}\in\signalmap_2(w')$ such that the states $\xi_1=x_1(t_1)$ and $\xi_2=x'(t_2)$, with $k_1=\timescaleUp{\tau_1}(t_1)$ and $k_2=\timescaleUp{\tau_2}(t_2)$, are related. To simulate $\EpS{1}$, the system $\EpS{2}$ must be able to continue from time $k_2$ with the same external signal as produced by $\EpS{1}$ after $k_1$. This is expressed in \eqref{equ:def:SimRel:b} by requiring the existence of an external signal $\gamma_2\in\BehE{2}$ which is constructed from the concatenation of the signals $\gamma'$ and $\gamma_1$, as depicted in Figure~\ref{fig:simrel_gamma}. 

\begin{figure}[htb]
\begin{center}
 \begin{tikzpicture}[scale=1]
 \begin{pgfonlayer}{background}
   \path      (0,0) node (o) {
      \includegraphics[width=0.45\linewidth]{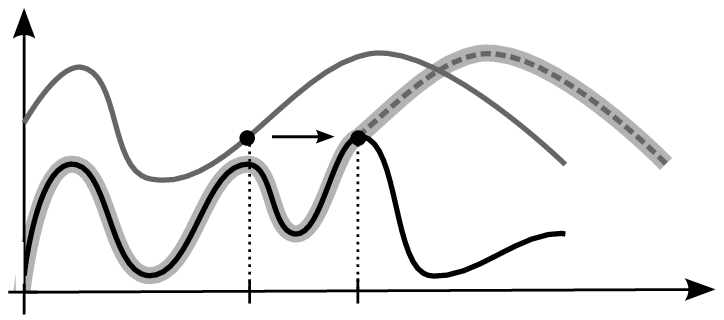}};
 \end{pgfonlayer}
 \begin{pgfonlayer}{foreground}
  \begin{footnotesize}
  \path (o.north west)+(0,-0.2) node (a) {$\Gamma$};
  \path (o.north east)+(-0.3,-1.9) node (t1) {$\gamma_2$};
\path (o.north east)+(-1.4,-1.9) node (t2) {$\gamma_1$}; 
\path (o.north east)+(-1.4,-2.5) node (t2) {$\gamma'$}; 
   \path (o.north east)+(0,-3.5) node (t2) {$\TE$};
 \path (o.south west)+(2.9,0) node (k2) {$k_1$}; 
 \path (o.south west)+(4.1,0) node (k2) {$k_2$}; 
\end{footnotesize}
 \end{pgfonlayer}
 \end{tikzpicture}
 \end{center}
 \caption{Visualization of the concatenation ${\gamma_2=\CONCAT{\gamma'}{k_2}{k_1}{\gamma_1}}$ in \eqref{equ:def:SimRel:b}. }\label{fig:simrel_gamma}
 \end{figure}
 
 To ensure that $\signalmap_2$ is non-anticipating, this concatenation is not allowed to change the past, which is why we require that the past of $x_2,w_2$ and $\tau_2$ match the past\footnote{
 In contrast to \cite[Def. 5.21]{JuliusPhdThesis2005}, we only require the strict past to match, because our concatenation definition \eqref{equ:concat} slightly differs from the one used in \cite{JuliusPhdThesis2005}.}
 of $x',w'$ and $\tau'$. Moreover, we have to ensure, that the state trajectories $x'$ and $x_2$ match at time $t_2$, expressed by $x_2(t_2)=x'(t_2)$.\\
\begin{figure}[htb]
\begin{center}
 \begin{tikzpicture}[scale=1]
 \begin{pgfonlayer}{background}
   \path      (0,0) node (o) {
      \includegraphics[width=0.45\linewidth]{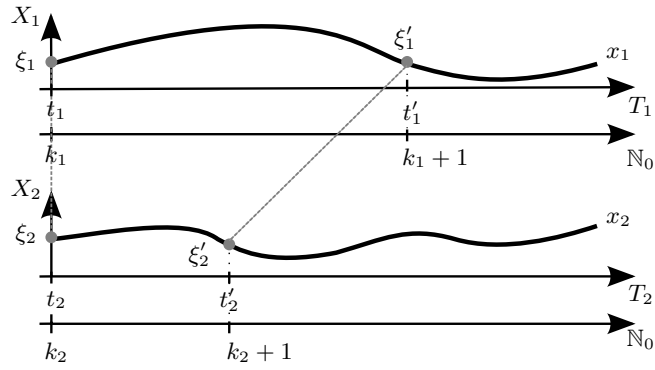}};
 \end{pgfonlayer}
 \begin{pgfonlayer}{foreground}
  \begin{footnotesize}
  \path (o.north west)+(0,-0.2) node (a) {$\X_1$};
  \path (o.north west)+(0,-0.8) node (b) {$\xi_1$};
   \path (o.north west)+(0,-2.5) node (c) {$\X_2$};
  \path (o.north west)+(0,-3.1) node (d) {$\xi_2$};
     \path (o.north east)+(-0.4,-0.7) node {$x_1$}; 
    \path (o.north east)+(-0.1,-1.4) node (t1) {$\T_1$};
    \path (o.north east)+(-0.1,-2.1) node (k1) {$\Nbn$};
\path (o.north east)+(-0.4,-2.9) node (t2) {$x_2$};
   \path (o.north east)+(-0.1,-3.9) node (t2) {$\T_2$};
   \path (o.north east)+(-0.1,-4.5) node (k2) {$\Nbn$};

    \path (0.9,1.8) node (xa) {$\xi_1'$};
    \path (o.north west)+(0.4,-1.45) node (d) {$t_1$};
     \path (o.north west)+(0.4,-2.05) node (d) {$k_1$};  
    \path (1,0.8) node (xa) {$t_1'$};
    \path (1.3,0.2) node (xa) {$k_1+1$};
    \path (o.south west)+(0.4,0.6) node (k2) {$t_2$}; 
     \path (o.south west)+(0.4,-0.1) node (k2) {$k_2$};
    \path (o.south west)+(2.7,0.6) node (k2) {$t_2'$};    
    \path (o.south west)+(3.1,-0.1) node (k2) {$k_2+1$}; 
     \path (o.south west)+(2.3,1.2) node (k2) {$\xi_2'$};    
   \end{footnotesize}   
 \end{pgfonlayer}
 \end{tikzpicture}
 \end{center}
  \vspace{-0.3cm}
  \caption{Visualization of the last line in \eqref{equ:def:SimRel:b} for two point to point time scale transformations $\tau_1$ and $\tau_2$, with $\timescaleDown{\tau_1}(k_1)=\Set{t_1}$, $\timescaleDown{\tau_1}(k_1+1)=\Set{t_1'}$,  $\timescaleDown{\tau_2}(k_2)=\Set{t_2}$ and $\timescaleDown{\tau_2}(k_2+1)=\Set{t_2'}$. Gray lines connect related states.}\label{fig:r-simulation_hscc}
\end{figure}

\begin{figure}[htb]
\begin{center}
 \begin{tikzpicture}[scale=1]
 \begin{pgfonlayer}{background}
   \path      (0,0) node (o) {
      \includegraphics[width=0.45\linewidth]{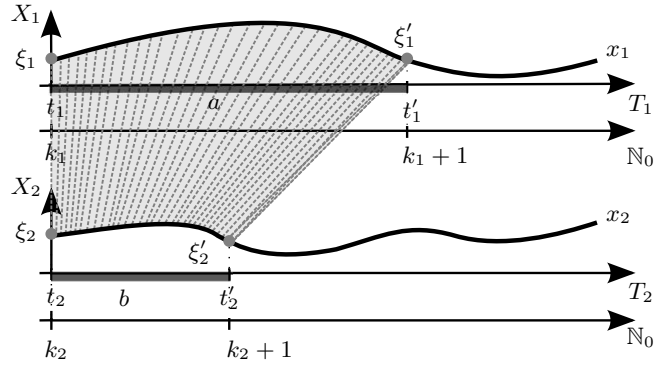}};
 \end{pgfonlayer}
 \begin{pgfonlayer}{foreground}
 \begin{footnotesize}
  \path (o.north west)+(0,-0.2) node (a) {$\X_1$};
  \path (o.north west)+(0,-0.8) node (b) {$\xi_1$};
   \path (o.north west)+(0,-2.5) node (c) {$\X_2$};
  \path (o.north west)+(0,-3.1) node (d) {$\xi_2$};
     \path (o.north east)+(-0.4,-0.7) node {$x_1$}; 
    \path (o.north east)+(-0.1,-1.4) node (t1) {$\T_1$};
    \path (o.north east)+(-0.1,-2.1) node (k1) {$\Nbn$};
\path (o.north east)+(-0.4,-2.9) node (t2) {$x_2$};
   \path (o.north east)+(-0.1,-3.9) node (t2) {$\T_2$};
   \path (o.north east)+(-0.1,-4.5) node (k2) {$\Nbn$};

    \path (0.9,1.8) node (xa) {$\xi_1'$};
    \path (o.north west)+(2.5,-1.4) node (d) {$a$};
    \path (o.north west)+(0.4,-1.45) node (d) {$t_1$};
     \path (o.north west)+(0.4,-2.05) node (d) {$k_1$};  
    \path (1,0.8) node (xa) {$t_1'$};
    \path (1.3,0.2) node (xa) {$k_1+1$};
    \path (o.south west)+(1.3,0.6) node (k2) {$b$}; 
    \path (o.south west)+(0.4,0.6) node (k2) {$t_2$}; 
     \path (o.south west)+(0.4,-0.1) node (k2) {$k_2$};
    \path (o.south west)+(2.7,0.6) node (k2) {$t_2'$};    
    \path (o.south west)+(3.1,-0.1) node (k2) {$k_2+1$}; 
     \path (o.south west)+(2.3,1.2) node (k2) {$\xi_2'$};    
  \end{footnotesize}   
 \end{pgfonlayer}
 \end{tikzpicture}
 \end{center}
  \vspace{-0.3cm}
  \caption{Visualization of the last line in \eqref{equ:def:SimRel:b} for two set to point time scale transformations $\tau_1$ and $\tau_2$, with $a=\timescaleDown{\tau_1}(k_1)=[t_1,t_1')$ and $b=\timescaleDown{\tau_2}(k_2)=[t_2,t_2')$. Gray lines connect related states.}\label{fig:p-simulation_hscc}
\end{figure}

\begin{figure}[htb]
\begin{center}
 \begin{tikzpicture}
 \begin{pgfonlayer}{background}
   \path      (0,0) node (o) {
      \includegraphics[width=0.45\linewidth]{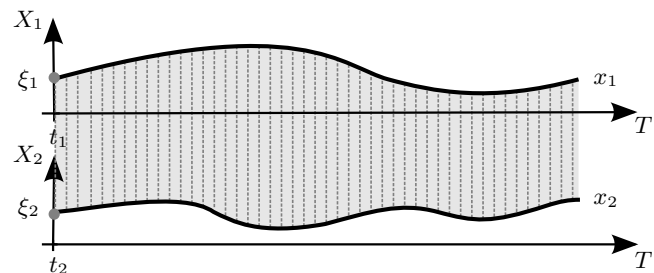}};
 \end{pgfonlayer}
 \begin{pgfonlayer}{foreground}
  \begin{footnotesize}
  \path (o.north west)+(0,-0.2) node (a) {$\X_1$};
  \path (o.north west)+(0,-1) node (b) {$\xi_1$};
   \path (o.north west)+(0,-2) node (c) {$\X_2$};
  \path (o.north west)+(0,-2.7) node (d) {$\xi_2$};
  \path (o.north east)+(-0.6,-1) node (t1) {$x_1$};
    \path (o.north east)+(-0.1,-1.6) node (t1) {$\T$};
\path (o.north east)+(-0.6,-2.6) node (t2) {$x_2$};    
   \path (o.north east)+(-0.1,-3.4) node (t2) {$\T$};
  \path (o.south west)+(0.4,1.7) node (k2) {$t_1$};  
 \path (o.south west)+(0.4,0) node (k2) {$t_2$};  
\end{footnotesize}
 \end{pgfonlayer}
 \end{tikzpicture}
 \end{center}
 \vspace{-0.7cm}
 \caption{Visualization of the last line in \eqref{equ:def:SimRel:b} for two identity time scale transformations $\tau_1=\tau_2=\I$.}\label{fig:t-simulation_hscc}
 \end{figure}
 The last line of \eqref{equ:def:SimRel:b} basically says that the state trajectories $x_1$ and $x_2$ need to stay related for all future external time instants. 
However, the nature of $\tau$ significantly influences how restrictive this requirement is. For example, having a point to point time scale transformation in both systems only requires state trajectories to be related at sampling points (Figure~\ref{fig:r-simulation_hscc}), while a set to point time scale transformation, for example, requires state trajectories to be related at all future times (Figure~\ref{fig:p-simulation_hscc}). However, as clearly visible in Figure~\ref{fig:r-simulation_hscc} and \ref{fig:p-simulation_hscc}, both cases allow for a stretching or shrinking of time between related state trajectories. If both systems have an identity time scale transformation (and therefore $\T=\T_1=\T_2=\TE$) this stretching or shrinking of time is 
no longer allowed, as shown in Figure~\ref{fig:t-simulation_hscc}. Note that the latter case only implies that the constructed asynchronous simulation relation is also a synchronous one, if we additionally require $k=k_1=k_2$, which immediately implies $t=t_1=t_2$.\\

 \begin{remark}\label{rem:GFS}
The intuitive interpretation of the different simulation relations depicted in Figure~\ref{fig:r-simulation_hscc}~-~\ref{fig:t-simulation_hscc} is very similar to the idea behind the $r$-, $p$- and $t$-simulation relations constructed in \cite{DavorenTabuada2007} for general flow systems. This suggests that for the subclass of discrete and continuous systems, our simulation relation can reproduce the relations in  \cite{DavorenTabuada2007} by choosing different time scale transformations. However, our relation extends the constructions in \cite{DavorenTabuada2007} by allowing to include the simulation of external trajectories. Furthermore, relating two systems with different time scale transformations gives an even richer variety of relations.
\end{remark}

\begin{remark}\label{rem:TS}
Recall that $\signalmap$-dynamical systems capture the dynamics of transition systems and linear time-invariant continuous systems (see Remark~\ref{rem:0}), if $\T=\T_1=\T_2=\TE$. 
Relating two systems implies a state trajectory matching requirement as depicted in Figure~\ref{fig:t-simulation_hscc}. Additionally, the external signal $\gamma$, which, in the case of transition systems is the output $y$, and, in the case of linear time-invariant continuous systems is the pair $\Tuple{u,y}$, needs to satisfy the requirement depicted in Figure~\ref{fig:simrel_gamma}. Observe that for complete systems this interpretation coincides with the locally defined simulation relation for transition systems, e.g., in \cite{Pappas2003,TabuadaPappas2003,TabuadaBook}. The same is true for the simulation relation constructed for linear time invariant systems in \cite{Schaft2004}. This suggests that both notions of simulation relations can be captured by our notion. 
\end{remark}

Using the simulation relations constructed in \REFdef{def:SimRel}, we can define similarity and bisimilarity for the class of state space $\signalmap$-dynamical systems in the usual fashion.\\
\begin{definition}\label{def:BisimRel_Behaviour}
$\EpS{1}$ is \textbf{asynchronously simulated}  by $\EpS{2}$, denoted by ${\EpS{1}\kg{\async}\EpS{2}}$, if there exists an asynchro\-nous simulation relation from $\ES{1}$ to $\ES{2}$. 
$\EpS{1}$ and $\EpS{2}$ are \textbf{asynchronously bisimilar}, denoted by ${\EpS{1}\hg{\async}\EpS{2}}$, if there exists a relation $\R\subseteq\X_1\times\X_2$ \SUCHTHAT $R$ and  $R^{-1}=\SetComp{\Tuple{x_2,x_1}}{\Tuple{x_1,x_2}\in\R}$ are asynchronous simulation relations from $\ES{1}$ to $\ES{2}$ and from $\ES{2}$ to $\ES{1}$, respectively.\\
$\EpS{1}$ is \textbf{externally synchronously simulated}  by $\EpS{2}$, denoted by ${\EpS{1}\kg{\wsync}\EpS{2}}$, if there exists an externally  synchronous simulation relation from $\ES{1}$ to $\ES{2}$. 
$\EpS{1}$ and $\EpS{2}$ are \textbf{externally synchronously bisimilar}, denoted by ${\EpS{1}\hg{\wsync}\EpS{2}}$, if there exists a relation $\R\subseteq\X_1\times\X_2$ \SUCHTHAT $R$ and $R^{-1}$ are externally synchronous simulation relations from $\ES{1}$ to $\ES{2}$ and from $\ES{2}$ to $\ES{1}$, respectively.\\
$\EpS{1}$ is \textbf{synchronously simulated}  by $\EpS{2}$, denoted by  ${\EpS{1}\kg{\sync}\EpS{2}}$, if there exists a synchronous simulation relation from $\ES{1}$ to $\ES{2}$. 
$\EpS{1}$ and $\EpS{2}$ are \textbf{synchronously bisimilar}, denoted by ${\EpS{1}\hg{\sync}\EpS{2}}$, if there exists a relation $\R\subseteq\X_1\times\X_2$ \SUCHTHAT $R$ and $R^{-1}$ are synchronous simulation relations from $\ES{1}$ to $\ES{2}$ and from $\ES{2}$ to $\ES{1}$, respectively.
\end{definition}

\section{Equivalence of External Behaviors}
Before proving the soundness of our construction we introduce another simulation relation to discuss the connection between behavioral equivalence and bisimilarity of two systems.\\

\begin{definition}\label{def:InitSimRel}
Let $\EpS{1}=\EpSRhs{1}$ and $\EpS{2}=\EpSRhs{2}$ be state space $\signalmap$-dynamical systems and let $l\in\TE$.\\
Then $\R\subseteq\X_1\times\X_2$ is an \textbf{$l$-initial simulation relation} from $\EpS{1}$ to $\EpS{2}$ (written $\R\in\SR{l}{\EpS{1}}{\EpS{2}}$) if
\begin{subequations}\label{equ:def:InitSimRel}
 \begin{equation}\label{equ:def:InitSimRel:a}
  \AllQ{\xi_1\in\Xk{l}{1}}{\ExQ*{\xi_2\in\Xk{l}{2}}{\Tuple{\xi_1,\xi_2}\in\R}}
 \end{equation}
  \end{subequations}
and \eqref{equ:def:SimRel:b} holds.
\end{definition}

For this simulation relation, $l$-initially similar and bisimilar systems are defined analogously to \REFdef{def:BisimRel_Behaviour} and are denoted by $\EpS{1}\kg{l}\EpS{2}$ and $\EpS{1}\hg{l}\EpS{2}$, respectively.\\
Observe that in \REFdef{def:InitSimRel}, the statement in \eqref{equ:def:SimRel:b} still needs to hold for arbitrary $k_1,k_2$ and $t_1,t_2$ (as for the asynchronous simulation relation). However, we  require in \eqref{equ:def:InitSimRel:a} that states $\xi_1$ reached at external time $k=l$ are related to states $\xi_2$ also reachable at external time $k=l$. Observe that this does in general not imply that \eqref{equ:def:SimRel:a} holds.
Due to the iterative nature of \eqref{equ:def:SimRel:b}, intuitively, relating states reached at external time $k=l$ leads to a relation between all states reachable for external time $k>l$ (explaining the name for this relation). In particular, if the external time axis has a minimal element $\nu$ (e.g., $\TE=\Nbn$ or $\TE=\Rbn$ with $\nu=0$), a $\nu$-initial simulation relation will imply that all reachable states are related in an externally synchronized fashion. The following lemma formalizes this intuition by proving various connections between the different relation types.\\

\begin{lemma}\label{lem:LinitSimImpliesOtherSims}
Let $\EpS{1}=\EpSRhs{1}$ and $\EpS{2}=\EpSRhs{2}$ be state-space $\signalmap$-dynamical systems \SUCHTHAT $\TE$ has the minimal element $\nu$. Then
 \begin{compactenum}[(i)]
   \item $\propImp{\R\in\SR{l=\nu}{\EpS{1}}{\EpS{2}}}{
   \R\in\SR{\wsync}{\EpS{1}}{\EpS{2}}}$,
  \item $\propImp{\R\in\SR{l=\nu}{\EpS{1}}{\EpS{2}}}{
   \R\in\SR{\async}{\EpS{1}}{\EpS{2}}}$, and
  \item\hspace{-0.2cm} ${\propImp{
  \begin{propConjA}
  \R\in\SR{l=\nu}{\EpS{1}}{\EpS{2}}\\
  \T_1=\T_2=\TE\\
  \AllQ{w_1,\Tuple{\gamma_1,\tau_1}\hspace{-0.1cm}\in\hspace{-0.1cm}\signalmap_1(w_1)}{\tau_1\hspace{-0.1cm}=\hspace{-0.1cm}\I}\\
  \AllQ{w_2,\Tuple{\gamma_2,\tau_2}\hspace{-0.1cm}\in\hspace{-0.1cm}\signalmap_2(w_2)}{\tau_2\hspace{-0.1cm}=\hspace{-0.1cm}\I}
  \end{propConjA}
  }{{\R\hspace{-0.08cm}\in\hspace{-0.08cm}\SR{\sync}{\EpS{1}}{\EpS{2}}}.}}$
 \end{compactenum}
\end{lemma}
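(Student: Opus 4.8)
The plan is to prove~(i) in full and then obtain~(ii) and~(iii) as short consequences. The key observation is that in the definition of an $l$-initial relation the core condition \eqref{equ:def:SimRel:b} is required for \emph{arbitrary} $k_1,k_2,t_1,t_2$, whereas each target relation demands it only for $k=k_1=k_2$ (and, for the synchronous case, additionally $t=t_1=t_2$). Hence in all three parts the \eqref{equ:def:SimRel:b}-part is inherited from the hypothesis for free, and the only thing left to establish is the respective initial condition \eqref{equ:def:SimRelSync:a}, \eqref{equ:def:SimRel:a}, or \eqref{equ:def:SimRelSync:2:a}.

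For~(i), I fix $k\in\TE$ and $\xi_1\in\Xk{k}{1}$, so $\xi_1=x_1(t)$ for some $\Tuple{w_1,x_1}\in\BehS{1}$, $\Tuple{\gamma_1,\tau_1}\in\signalmap_1(w_1)$ and $t\in\timescaleDown{\tau_1}(k)$. If $k=\nu$ the claim is exactly \eqref{equ:def:InitSimRel:a}. If $k>\nu$, I restrict the \emph{same} trajectory down to the initial external time: since $\tau_1$ is surjective I may pick $t^\nu\in\timescaleDown{\tau_1}(\nu)$, and then $\xi_1^\nu:=x_1(t^\nu)\in\Xk{\nu}{1}$. By \eqref{equ:def:InitSimRel:a} there is a related $\xi_2^\nu=x'(t_2^\nu)\in\Xk{\nu}{2}$, witnessed by $\Tuple{w',x'}\in\BehS{2}$, $\Tuple{\gamma',\tau'}\in\signalmap_2(w')$, $t_2^\nu\in\timescaleDown{\tau'}(\nu)$. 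Now I apply \eqref{equ:def:SimRel:b} to $\Tuple{w_1,x_1}$ and $\Tuple{w',x'}$ with $t_1=t^\nu$, $t_2=t_2^\nu$, $k_1=k_2=\nu$; its antecedent $\Tuple{\xi_1^\nu,\xi_2^\nu}\in\R$, $\nu=\tau_1(t^\nu)$, $\nu=\tau'(t_2^\nu)$ holds, producing some $\Tuple{w_2,x_2}\in\BehS{2}$, $\Tuple{\gamma_2,\tau_2}\in\signalmap_2(w_2)$. Instantiating the last line of \eqref{equ:def:SimRel:b} at external time $k$ and $t_1'=t$ yields $t_2'\in\timescaleDown{\tau_2}(k)$ with $\Tuple{x_1(t),x_2(t_2')}\in\R$, so $\xi_2:=x_2(t_2')\in\Xk{k}{2}$ satisfies $\Tuple{\xi_1,\xi_2}\in\R$. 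This is \eqref{equ:def:SimRelSync:a}.

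I expect the one delicate point of this step to be the applicability of the last line of \eqref{equ:def:SimRel:b}, which quantifies over $t_1'\in\timescaleDown{\tau_1}(k-k_2+k_1)$ with $t_1'>t_1$. With $k_1=k_2=\nu$ the index collapses, $k-k_2+k_1=k$, so the required membership is precisely $t\in\timescaleDown{\tau_1}(k)$, which holds by construction. For the strict inequality $t>t^\nu$ I use monotonicity and surjectivity of $\tau_1$: were $t\le t^\nu$, monotonicity would give $k=\tau_1(t)\le\tau_1(t^\nu)=\nu$, contradicting $k>\nu$. This small amount of time-axis bookkeeping, together with verifying that all remaining hypotheses of \eqref{equ:def:SimRel:b} are met, is the main (though essentially routine) obstacle.

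Parts~(ii) and~(iii) are then short. For~(ii), \eqref{equ:def:SimRel:a} follows from \eqref{equ:def:SimRelSync:a} of part~(i) by taking the union over $k\in\TE$: any $\xi_1\in\XK{1}$ lies in some $\Xk{k}{1}$ and its partner lies in $\Xk{k}{2}\subseteq\XK{2}$; combined with the hypothesis's \eqref{equ:def:SimRel:b} this gives $\R\in\SR{\async}{\EpS{1}}{\EpS{2}}$. For~(iii), the extra assumptions $\T_1=\T_2=\TE$ and $\tau_1=\tau_2=\I$ give $\timescaleDown{\tau}(k)=\Set{k}$, so the internal and external time-indexed state spaces coincide, $\Xt{k}{i}=\Xk{k}{i}$ for $i\in\Set{1,2}$ (cf.\ the remark after \REFdef{def:TimeIndexedStateSpace1}); thus \eqref{equ:def:SimRelSync:a} becomes \eqref{equ:def:SimRelSync:2:a}. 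Moreover $\tau_i=\I$ forces $t_i=k_i$, so restricting \eqref{equ:def:SimRel:b} to $k=k_1=k_2$ automatically gives $t=t_1=t_2$, as the synchronous relation requires. The only genuinely subtle point here is the inclusion $\Xt{k}{1}\subseteq\Xk{k}{1}$, which presupposes that every state reachable at time $k$ is reached by a trajectory with non-empty signal-map image; I would either lean on the quoted remark or record this as a mild standing assumption on $\signalmap_1$.
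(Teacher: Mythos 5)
Your proof is correct and takes essentially the same route as the paper's: you establish \eqref{equ:def:SimRelSync:a} by relating states at the minimal external time $\nu$ via \eqref{equ:def:InitSimRel:a} and then propagating the relation to all later external times through the last line of \eqref{equ:def:SimRel:b} (the paper's fact~(A)), inherit the \eqref{equ:def:SimRel:b}-part for free (fact~(D)), obtain (ii) by taking the union over $k\in\TE$ (fact~(B)), and obtain (iii) from the identity time scale collapsing internal and external time-indexed state spaces (fact~(C)). If anything, you are more explicit than the paper on two points it glosses over, namely the strict-inequality bookkeeping $t>t^\nu$ needed to invoke the last line of \eqref{equ:def:SimRel:b}, and the fact that the inclusion $\Xt{t}{1}\subseteq\Xk{t}{1}$ tacitly presupposes that $\signalmap_1$ has non-empty images along the relevant trajectories.
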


\begin{proof}
Pick $\R\in\SR{l=\nu}{\EpS{1}}{\EpS{2}}$ and observe the following facts:
\begin{compactenum}[ (A)]
\item \label{item:proof:lem:LinitSimImpliesOtherSims:1} %
 \eqref{equ:def:SimRelSync:a} holds for $\R$:\\
As \eqref{equ:def:InitSimRel:a} holds for $\R$ (using \REFdef{def:TimeIndexedStateSpace1}) we can fix $\Tuple{w_1,x_1}\in\BehS{1}, \Tuple{\gamma_1,\tau_1}\in\signalmap_1(w_1),t_1\in\timescaleDown{\tau_1}(\nu)$ and $\Tuple{w',x'}\in\BehS{2},\Tuple{\gamma',\tau'}\in\signalmap_2(w'),t_2\in\timescaleDown{\tau_2}(\nu)$ s.t. $\Tuple{x_1(t_1),x'(t_2)}\in\R$. Since $\R\in\SR{l}{\EpS{1}}{\EpS{2}}$, \eqref{equ:def:SimRel:b} implies that there exist $\Tuple{w_2,x_2}\in\BehS{2},\Tuple{\gamma_2,\tau_2}\in\signalmap_2(w_2)$ s.t. $\AllQ{k\geq \nu,t_1'\in\timescaleDown{\tau_1}(k)}{\ExQ{t_2'\in\timescaleDown{\tau_2}(k)}{
\Tuple{x_1(t_1'),x_2(t_2')}\in\R}}$. Using \REFdef{def:TimeIndexedStateSpace1} and the fact that $\nu$ is the minimal element of $\TE$, this implies that \eqref{equ:def:SimRelSync:a} holds.
 \item \label{item:proof:lem:LinitSimImpliesOtherSims:2} \eqref{equ:def:SimRelSync:a} implies \eqref{equ:def:SimRel:a} since $\XK{i}\deff\bigcup_{k\in\TE}\Xk{k}{i}$ for $i\in\Set{1,2}$ from \REFdef{def:TimeIndexedStateSpace1}.
 \item \label{item:proof:lem:LinitSimImpliesOtherSims:3} Let $i\in\Set{1,2}$. If $\T_i=\TE$ and $\AllQ{w_i,\Tuple{\gamma_i,\tau_i}\in\signalmap_i(w_i)}{\tau_i=\I}$ then $\AllQ{t\in\T_i}{\Xt{t}{i}=\Xk{t}{i}}$ from \REFdef{def:TimeIndexedStateSpace1} implies \eqref{equ:def:SimRelSync:a} \IFF \eqref{equ:def:SimRelSync:2:a}.
 \item \label{item:proof:lem:LinitSimImpliesOtherSims:4} If \eqref{equ:def:SimRel:b} holds, it also holds for $k=k_1=k_2$ and $t=t_1=t_2$.
\end{compactenum} 
Now 
\begin{inparaenum}[ (i)]
 \item follows from \eqref{item:proof:lem:LinitSimImpliesOtherSims:1} and \eqref{item:proof:lem:LinitSimImpliesOtherSims:4}, 
 \item follows from \eqref{item:proof:lem:LinitSimImpliesOtherSims:1} and \eqref{item:proof:lem:LinitSimImpliesOtherSims:2}, and
 \item follows from \eqref{item:proof:lem:LinitSimImpliesOtherSims:1}, \eqref{item:proof:lem:LinitSimImpliesOtherSims:3} and \eqref{item:proof:lem:LinitSimImpliesOtherSims:4}.
\end{inparaenum}
\end{proof}\vspace{0.5cm}
\begin{remark}
 The inverse implication in \REFlem{lem:LinitSimImpliesOtherSims} (i) does not hold, as $\R\in\SR{\wsync}{\EpS{1}}{\EpS{2}}$ does not imply that \eqref{equ:def:SimRel:b} holds for arbitrary $k_1\neq k_2$.
\end{remark}

\begin{remark}
 Recall from Remark~\ref{rem:0} that $\signalmap$-dynamical systems can represent transition systems using an external time axis $\TE=\Nbn$ (with minimal element $\nu=0$). For this system class, simulation relations are usually defined by requiring that the initial states are related and a local property, similar to \eqref{equ:def:SimRel:b}, holds (see, e.g., \cite{Pappas2003,TabuadaPappas2003b,TabuadaBook}). This suggests, that simulation relations defined for transition systems are $0$-initial simulation relations in our sense.
\end{remark}

As the main result of this section we generalize the results in 
\cite[Thm. 5.41]{JuliusPhdThesis2005} to state space $\signalmap$-dynamical systems with external time axis having the minimal element $\nu$ and
show that the existence of a $\nu$-initial simulation relation from one system to another one implies that the behavior of the first is a subset of the second one. As an immediate consequence, behavioral equivalence is obtained if two systems are $\nu$-initially bisimilar.\\

\begin{theorem}\label{thm:BisimImplExtBehInclusion}
Let $\EpS{1}=\EpSRhs{1}$ and $\EpS{2}=\EpSRhs{2}$ be state-space $\signalmap$-dynamical systems \SUCHTHAT $\TE$ has the minimal element $\nu$. 
Then
\begin{compactenum}[(i)]
 \item ${\propImp{\BR{\ES{1}\kg{l=\nu}\ES{2}}}{\BR{\BehE{1}\subseteq\BehE{2}}}}$
\item ${\propImp{\BR{\ES{1}\hg{l=\nu}\ES{2}}}{\BR{\BehE{1}=\BehE{2}}}}$
\end{compactenum}
\end{theorem}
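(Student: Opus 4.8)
The plan is to prove (i) directly and obtain (ii) by applying (i) in both directions. For (i), fix a $\nu$-initial simulation relation $\R\in\SR{l=\nu}{\EpS{1}}{\EpS{2}}$, whose existence is exactly the hypothesis $\EpS{1}\kg{l=\nu}\EpS{2}$, and take an arbitrary $\gamma_1\in\BehE{1}$. By \eqref{equ:BehE} there are $\Tuple{w_1,x_1}\in\BehS{1}$ and $\tau_1\in\timescale$ with $\Tuple{\gamma_1,\tau_1}\in\signalmap_1(w_1)$ (the manifest witness $w_1$ carries a state component since the underlying behavior is the $\W$-projection of $\BehS{1}$). Because every $\tau_1\in\timescale$ is surjective onto $\TE$, the preimage $\timescaleDown{\tau_1}(\nu)$ is nonempty; I would pick $t_1\in\timescaleDown{\tau_1}(\nu)$ and set $\xi_1\deff x_1(t_1)$, so $\xi_1\in\Xk{\nu}{1}$ by \REFdef{def:TimeIndexedStateSpace1}.

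I would then feed $\xi_1$ into the reachability clause \eqref{equ:def:InitSimRel:a} to obtain $\xi_2\in\Xk{\nu}{2}$ with $\Tuple{\xi_1,\xi_2}\in\R$. Unfolding membership in $\Xk{\nu}{2}$ (again via \REFdef{def:TimeIndexedStateSpace1}) yields witnesses $\Tuple{w',x'}\in\BehS{2}$, $\Tuple{\gamma',\tau'}\in\signalmap_2(w')$ and $t_2\in\timescaleDown{\tau'}(\nu)$ with $x'(t_2)=\xi_2$. Now the premises of the local clause \eqref{equ:def:SimRel:b} are satisfied: the states $x_1(t_1)$ and $x'(t_2)$ are $\R$-related, and the external times $k_1=\timescaleUp{\tau_1}(t_1)$ and $k_2=\timescaleUp{\tau'}(t_2)$ both equal $\nu$. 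Note \eqref{equ:def:SimRel:b} is available because the $\nu$-initial simulation relation requires it to hold for arbitrary $k_1,k_2$.

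The crux is the concatenation that \eqref{equ:def:SimRel:b} returns: there are $\Tuple{w_2,x_2}\in\BehS{2}$ and $\Tuple{\gamma_2,\tau_2}\in\signalmap_2(w_2)$ with $\gamma_2=\CONCAT{\gamma'}{k_2}{k_1}{\gamma_1}$. Since $k_1=k_2=\nu$ and $\nu$ is the \emph{minimal} element of $\TE$, the first branch $k<k_2$ of the concatenation \eqref{equ:concat} is vacuous, while for $k\geq\nu$ the shift index collapses, $k-k_2+k_1=k-\nu+\nu=k$, giving $\gamma_2(k)=\gamma_1(k)$ for all $k\in\TE$. Hence $\gamma_2=\gamma_1$, and as $\Tuple{\gamma_2,\tau_2}\in\signalmap_2(w_2)$ with $\Tuple{w_2,x_2}\in\BehS{2}$, \eqref{equ:BehE} yields $\gamma_1\in\BehE{2}$. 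Since $\gamma_1$ was arbitrary, $\BehE{1}\subseteq\BehE{2}$.

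For (ii), $\nu$-initial bisimilarity supplies one relation $\R$ with $\R\in\SR{l=\nu}{\EpS{1}}{\EpS{2}}$ and $\R^{-1}\in\SR{l=\nu}{\EpS{2}}{\EpS{1}}$; applying (i) to $\R$ gives $\BehE{1}\subseteq\BehE{2}$ and to $\R^{-1}$ gives $\BehE{2}\subseteq\BehE{1}$, so $\BehE{1}=\BehE{2}$. The one point I expect to need care is the concatenation collapse at the minimal external time — checking that no index strictly below $\nu$ is ever queried and that the shift $k-k_2+k_1$ reduces to the identity, so that $\gamma_2$ reproduces $\gamma_1$ exactly; the remainder is routine unfolding of \REFdef{def:TimeIndexedStateSpace1}, \eqref{equ:BehE}, and the simulation clause \eqref{equ:def:SimRel:b}.
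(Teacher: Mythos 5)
Your proof is correct and follows essentially the same route as the paper's: unfold $\gamma_1\in\BehE{1}$ via \eqref{equ:BehE}, use surjectivity of $\tau_1$ and clause \eqref{equ:def:InitSimRel:a} to obtain an $\R$-related state pair at external time $\nu$, invoke \eqref{equ:def:SimRel:b} with $k_1=k_2=\nu$, observe that the concatenation $\CONCAT{\gamma'}{\nu}{\nu}{\gamma_1}$ collapses to $\gamma_1$ because $\nu$ is minimal, and obtain (ii) by applying (i) to $\R$ and $\R^{-1}$. Your explicit verification of the preimage nonemptiness and of the concatenation collapse only spells out details the paper leaves implicit.
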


\begin{proof}
Using \eqref{equ:BehE}, the statement $\BehE{1}\subseteq\BehE{2}$ is equivalent to
\[\AllQ{\gamma\in\Gamma^\TE}{
\propImpSplit{\ExQ{\Tuple{\x_1,\w_1}\in\BehS{1},\tau_1\in \timescale_1}{\Tuple{\gamma,\tau_1}\in\signalmap_1(\w_1)}}{\ExQ{\Tuple{\x_2,\w_2}\in\BehS{2},\tau_2\in \timescale_2}{\Tuple{\gamma,\tau_2}\in\signalmap_2(\w_2)},  }}
\]
where $\timescale_i,~i\in\{1,2\}$ is the set of valid time scale transformations from $T_i$ to $\TE$.
Fix $\gamma,\x_1,\w_1,\tau_1$ \SUCHTHAT $\Tuple{\gamma,\tau_1}\in\signalmap_1(\w_1)$.
Since $\ES{1}\kg{l=\nu}\ES{2}$, \eqref{equ:def:InitSimRel:a} holds for $k=\nu$.
Therefore, we can pick $t_1\in\timescaleDown{\tau_1}(\nu),\Tuple{w',x'}\in\BehS{2},\Tuple{\gamma',\tau'}\in\signalmap_2(w'),t_2\in\timescaleDown{\tau'}(\nu)$ s.t.
$\Tuple{x_1(t_1),\x'(t_2)}\in\R$.
Using \eqref{equ:def:SimRel:b} for $k_1=k_2=\nu$ this implies that
$\ExQ{\Tuple{w_2,x_2}\in\BehS{2}, \Tuple{\gamma_2,\tau_2}\in\signalmap_2(w_2)}{
\gamma_2=\CONCAT{\gamma'}{\nu}{\nu}{\gamma}=\gamma}$, which proves statement (i).
 Part (ii) follows immediately from (i) and \REFdef{def:BisimRel_Behaviour}. 
\end{proof}\vspace{0.5cm}

\begin{remark}
  \REFthm{thm:BisimImplExtBehInclusion} does not extend to the asynchronous simulation case, since here we cannot ensure finding pairs $x_1$ and $x'$ s.t. their initial states are related.
\end{remark}

\section{Soundness}
As the main result of this paper we show that the simulation relations in \REFdef{def:SimRel} are well defined by proving that they are preorders for their respective class of state space $\signalmap$-dynamical systems.\\

\begin{theorem}\label{thm:SimAsPreorder}
The relations $\kg{\async}$, $\kg{\wsync}$, $\kg{\sync}$ and $\kg{l}$ are preorders for the class of \textbf{asynchronous} state space $\signalmap$-dynamical systems.
\end{theorem}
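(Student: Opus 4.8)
The plan is to prove each of the four relations is both reflexive and transitive, since a preorder is by definition a reflexive and transitive relation. All four cases ($\kg{\async}$, $\kg{\wsync}$, $\kg{\sync}$, $\kg{l}$) share one template: they differ only in the coverage axiom used (\eqref{equ:def:SimRel:a}, \eqref{equ:def:SimRelSync:a}, \eqref{equ:def:SimRelSync:2:a}, \eqref{equ:def:InitSimRel:a}, respectively) and in whether the core condition \eqref{equ:def:SimRel:b} is imposed in full or only for $k=k_1=k_2$ (and additionally $t=t_1=t_2$ in the $\sync$ case). I would therefore carry out the two constructions below once and then check that they respect these restrictions.

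For reflexivity I would exhibit the diagonal $\R=\SetComp{\Tuple{\xi,\xi}}{\xi\in\X}$ as a simulation relation from $\EpS{}$ to itself, so that $\EpS{}\kg{\bullet}\EpS{}$. The coverage axioms hold trivially by taking $\xi_2=\xi_1$, since $\xi_1\in\XK{}$ (resp. $\Xk{k}{}$, $\Xt{t}{}$) places $\xi_2$ in the same set. For \eqref{equ:def:SimRel:b} I am given $\Tuple{w_1,x_1},\Tuple{w',x'}\in\BehS{}$ with $x_1(t_1)=x'(t_2)$ and $k_1=\timescaleUp{\tau_1}(t_1)$, $k_2=\timescaleUp{\tau'}(t_2)$. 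Here the hypothesis is used essentially: because the system is \emph{asynchronous}, the state property \eqref{equ:def:SSDynSysInducesFiniteV} lets me concatenate the past of $\Tuple{w',x'}$ with the future of $\Tuple{w_1,x_1}$ at the common state value \emph{even though} $t_1\neq t_2$, yielding a witness $\Tuple{w_2,x_2}=\CONCAT{\Tuple{w',x'}}{t_2}{t_1}{\Tuple{w_1,x_1}}\in\BehS{}$ together with the signal-map element $\Tuple{\CONCAT{\gamma'}{k_2}{k_1}{\gamma_1},\CONCAT{\tau'}{t_2}{t_1}{\BR{\tau_1+c}}}$ provided by \eqref{equ:def:SSDynSysInducesFiniteV}. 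The conjuncts of \eqref{equ:def:SimRel:b} then follow from the concatenation definition \eqref{equ:concat}: the strict past of $w_2,x_2,\tau_2$ agrees with that of $w',x',\tau'$, we have $x_2(t_2)=x_1(t_1)=x'(t_2)$ and $\gamma_2=\CONCAT{\gamma'}{k_2}{k_1}{\gamma_1}$, and at every future external instant the tail of $x_2$ is the time-shifted tail of $x_1$, hence equal to $x_1$ and so $\R$-related. This is exactly where asynchronicity is indispensable, which explains the standing hypothesis of the theorem.

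For transitivity I would take $\R_{12}\in\SR{\bullet}{\EpS{1}}{\EpS{2}}$ and $\R_{23}\in\SR{\bullet}{\EpS{2}}{\EpS{3}}$ and propose the composite $\R_{13}=\SetComp{\Tuple{\xi_1,\xi_3}}{\exists\,\xi_2\in\XK{2}:\,\Tuple{\xi_1,\xi_2}\in\R_{12}\wedge\Tuple{\xi_2,\xi_3}\in\R_{23}}$, deliberately forcing the intermediate state to be \emph{reachable} so that it is realised on an actual system-$2$ trajectory. Coverage for $\R_{13}$ then follows by chaining: a reachable $\xi_1$ gives a reachable related $\xi_2$ via $\R_{12}$, which gives a reachable related $\xi_3$ via $\R_{23}$ (for the $\wsync$, $\sync$, and $l$-variants this chaining takes place at a fixed external time $k$, internal time $t$, or level $l$). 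The heart is \eqref{equ:def:SimRel:b}: given related $\Tuple{x_1(t_1),x_3(t_3)}\in\R_{13}$ on $\Tuple{w_1,x_1}\in\BehS{1}$ and $\Tuple{w_3,x_3}\in\BehS{3}$, I pick the reachable witness $\xi_2$, realise it as $x_2(s)=\xi_2$ on some $\Tuple{w_2,x_2}\in\BehS{2}$, apply \eqref{equ:def:SimRel:b} for $\R_{12}$ to $(\Tuple{w_1,x_1},\Tuple{w_2,x_2})$ to obtain a system-$2$ witness mimicking system $1$, and then apply \eqref{equ:def:SimRel:b} for $\R_{23}$ to (that witness, $\Tuple{w_3,x_3}$) to obtain the final system-$3$ witness. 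Chaining the two $\gamma$-concatenations produces the required $\CONCAT{\gamma_3}{k_3}{k_1}{\gamma_1}$, the strict-past and state-matching conjuncts transfer through the $\R_{23}$-step, and the two future-relation clauses compose so that $x_1$ and $x_3$ stay $\R_{13}$-related at all future external instants.

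The main obstacle is precisely this last diagram chase, and two points need care. First, one must secure a reachable intermediate trajectory through $\xi_2$ at a \emph{compatible} (external) time; this is why $\R_{13}$ is built from intermediates in $\XK{2}$, and in the $\wsync$/$\sync$ variants one must additionally check that $\xi_2$ can be taken at the matching external (and, for $\sync$, internal) time so that the restriction $k=k_1=k_2$ (resp. $t=t_1=t_2$) survives both applications. Second, the future-relation clauses of \eqref{equ:def:SimRel:b} carry time-scale offsets of the form $k-k_2+k_1$, and threading the nested $\forall k\,\forall t_1'\,\exists t_2'\,\exists t_3'$ quantifiers with consistent offsets across the composition is the only genuinely delicate bookkeeping; the $\wsync$, $\sync$, and $l$ cases are easier here because the offsets collapse ($k_1=k_2$, and also $t_1=t_2$ for $\sync$). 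Everything else reduces to the concatenation identity \eqref{equ:concat} and the definitions of the time-indexed state spaces in \REFdef{def:TimeIndexedStateSpace1}.
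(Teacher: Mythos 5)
Your proposal is correct and follows essentially the same route as the paper's own proof: reflexivity via the diagonal relation, with the witness trajectory and signal-map element built by the asynchronous concatenation property \eqref{equ:def:SSDynSysInducesFiniteV} exactly as in the paper's construction \eqref{equ:proof:construct_preoreder}, and transitivity via the relational composition of $\R_{1,2}$ and $\R_{2,3}$, chaining two applications of \eqref{equ:def:SimRel:b} through an intermediate system-2 trajectory and composing the concatenation identity and the offset future-relation clauses. The single point where you deviate works in your favor: by requiring the intermediate state of the composite relation to lie in $\XK{2}$ you guarantee it is realized by an actual pair $\Tuple{w'',x''}\in\BehS{2}$, a realizability step the paper needs but simply asserts (its composition quantifies over all $\xi_2\in\X_2$, yet its proof then claims a trajectory through $\xi_2$ exists ``from the construction of $\R_{1,3}$''), and you likewise flag explicitly the matching-time requirement on the intermediate in the $\kg{\wsync}$, $\kg{\sync}$ and $\kg{l}$ cases, which the paper passes over in silence.
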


\begin{proof}
To simplify notation, we denote the conjunction on the right hand side of \eqref{equ:def:SimRel:b} by $\Omega$, i.e.
\[\Omega(\cdot_a,\cdot_b,\cdot_c):=\begin{propConjA}
\gamma_c=\CONCAT{\gamma_b}{k_c}{k_a}{\gamma_a}\\
\AllQ{t\in\T_c, t< t_c}{
\begin{propConjA}
w_c(t)=w_b(t)\\
x_c(t)=x_b(t)\\
\tau_c(t)=\tau_b(t)\\
\end{propConjA}}\\
x_c(t_c)=x_b(t_c)\\
\AllQSplit{k\geq k_c,t_a'\in\timescaleDown{\tau_a}(k-k_c+k_a),t_a'>t_a}{\ExQ{t_c'\in\timescaleDown{\tau_c}(k),t_c'>t_c }{
\Tuple{x_a(t_a'),x_c(t_c')}\in\R}}
\end{propConjA}.\]
A relation is a preorder, if it is reflexive and transitive.\\
\textbf{1. reflexivity:}\\
To prove reflexivity, pick an arbitrary $\EpS{}=\EpSRhs{}$, construct $\R\subseteq\X\times\X$ s.t. $\propAequ{\Tuple{\xi_1,\xi_2}\in\R}{\xi_1=\xi_2}$ and show that \eqref{equ:def:SimRel}, \eqref{equ:def:SimRelSync}, \eqref{equ:def:SimRelSync:2} and \eqref{equ:def:InitSimRel} hold:
\begin{compactitem}
 \item \eqref{equ:def:SimRel:a}, \eqref{equ:def:SimRelSync:a}, \eqref{equ:def:SimRelSync:2:a} and \eqref{equ:def:InitSimRel:a} hold by construction.
 \item Remember from fact (D) in the proof of \REFlem{lem:LinitSimImpliesOtherSims} that if \eqref{equ:def:SimRel:b} holds, it also holds for $k=k_1=k_2$ and $t=t_1=t_2$.
\item To show that \eqref{equ:def:SimRel:b} holds, fix $\Tuple{w_1,x_1}\in\BehS{},\Tuple{w',x'}\in\BehS{}, \Tuple{\gamma_1,\tau_1}\in\signalmap(w_1),\Tuple{\gamma',\tau'}\in\signalmap(w'), t_1,t_2\in\T, k_1,k_2\in\T_E$ s.t.
the left side of \eqref{equ:def:SimRel:b} is true,
pick  $w_2\in\W^\T$, $x_2\in\X^\T$, $\gamma_2\in\Gamma^{\TE}$, $\tau_2\in\TE^\T$ s.t.
 \begin{align}\label{equ:proof:construct_preoreder}
  \w_2&=\CONCAT{\w'}{t_2}{t_1}{\w_1}&
  \x_2&=\CONCAT{\x'}{t_2}{t_1}{\x_1}&
  \tau_2&=\CONCAT{\tau'}{t_2}{t_1}{(\tau_1+c)}&
  \gamma_2&=\CONCAT{\gamma'}{k_2}{k_1}{\gamma_1}
\end{align}
and show that the right side of \eqref{equ:def:SimRel:b} is true. \\
\begin{inparaitem}[$\blacktriangleright$]
\item Observe that the first three lines of $\Omega(\cdot_1,\cdot',\cdot_2)$ follow directly from \eqref{equ:proof:construct_preoreder} and from the construction of $\R$ implying $x_1(t_1)=x'(t_2)$. \\
\item Now using \REFdef{def:SSDynSysInducesFiniteV} we can conclude that $\Tuple{w_2,x_2}\in\BehS{}$ and $\Tuple{\gamma_2,\tau_2}\in\signalmap(w_2)$ since $\Tuple{w_1,x_1}\in\BehS{},\Tuple{w',x'}\in\BehS{}$ and $x_1(t_1)=x'(t_2)=x_2(t_2)$.\\
\item To show that the last line of $\Omega(\cdot_1,\cdot',\cdot_2)$ is true, observe that \eqref{equ:proof:construct_preoreder} implies $\AllQ{k\geq k_2,t_1'\in\timescaleDown{\tau_1}(k-k_2+k_1),t_1'>t_1,t_2'\in\timescaleDown{\tau_2}(k),t_2'>t_2 }{x_1(t_1')=x_2(t_2')}$. From the construction of $\R$ this implies $\Tuple{x_1(t_1'),x_2(t_2')}\in\R$.
\end{inparaitem}
\end{compactitem}
\textbf{2. transitivity}\\
To prove transitivity, pick arbitrary\footnote{Since the proof is equivalent for all relations, we do not specify them and use $\kg{}$ as their unique representative.} $\EpS{1},\EpS{2},\EpS{3}$ s.t. $\propConj{\BR{\EpS{1}\kg{}\EpS{2}}}{\BR{\EpS{2}\kg{}\EpS{3}}}$. This implies that there exist simulation relations $\R_{1,2}$ and $\R_{2,3}$ from $\EpS{1}$ to $\EpS{2}$ and  $\EpS{2}$ to $\EpS{3}$, respectively. Now construct $\R_{1,3}$ s.t.
\[\propAequ{\Tuple{\xi_1,\xi_3}\in\R_{1,3}}{\ExQ*{\xi_2\in\X_2}{\propConj*{\Tuple{\xi_1,\xi_2}\in\R_{1,2}}{\Tuple{\xi_2,\xi_3}\in\R_{2,3}}}}\] and show that \eqref{equ:def:SimRel}, \eqref{equ:def:SimRelSync}, \eqref{equ:def:SimRelSync:2} and \eqref{equ:def:InitSimRel} hold for $\R_{1,3}$, implying $\EpS{1}\kg{}\EpS{3}$.
\begin{compactitem}
\item Observe that \eqref{equ:def:SimRel:a}, \eqref{equ:def:SimRelSync:a}, \eqref{equ:def:SimRelSync:2:a} and \eqref{equ:def:InitSimRel:a} hold for $\R_{1,2}$ and $\R_{2,3}$, implying
\begin{small}
\begin{align*}
&\AllQ{\xi_1\in\X_1}{\ExQ*{\xi_2\in\X_{2},\xi_3\in\X_{3}}{\propConj*{\Tuple{\xi_1,\xi_2}\in\R_{1,2}}{\Tuple{\xi_2,\xi_3}\in\R_{2,3}},}}\\
&\AllQ{k\in\TE,\xi_1\in\Xk{k}{1}}{\ExQ*{\xi_2\in\Xk{k}{2},\xi_3\in\Xk{k}{3}}{\propConj*{\Tuple{\xi_1,\xi_2}\in\R_{1,2}}{\Tuple{\xi_2,\xi_3}\in\R_{2,3}},}}\\
&\AllQ{t\in\T,\xi_1\in\Xt{t}{1}}{\ExQ*{\xi_2\in\Xt{t}{2},\xi_3\in\Xt{t}{3}}{\propConj*{\Tuple{\xi_1,\xi_2}\in\R_{1,2}}{\Tuple{\xi_2,\xi_3}\in\R_{2,3}},}}\\
&\AllQ{\xi_1\in\Xk{l}{1}}{\ExQ*{\xi_2\in\Xk{l}{2},\xi_3\in\Xk{l}{3}}{\propConj*{\Tuple{\xi_1,\xi_2}\in\R_{1,2}}{\Tuple{\xi_2,\xi_3}\in\R_{2,3}},}}
\end{align*}
\end{small}
respectively. Using the construction of $\R_{1,3}$ this implies that \eqref{equ:def:SimRel:a}, \eqref{equ:def:SimRelSync:a}, \eqref{equ:def:SimRelSync:2:a} and \eqref{equ:def:InitSimRel:a} hold for $\R_{1,3}$.
\item Remember from fact (D) in the proof of \REFlem{lem:LinitSimImpliesOtherSims} that if \eqref{equ:def:SimRel:b} holds, it also holds for $k=k_1=k_2$ and $t=t_1=t_2$.
\item To show
 \eqref{equ:def:SimRel:b}, fix $\Tuple{w_1,x_1}\in\BehS{1},\Tuple{w',x'}\in\BehS{3}, \Tuple{\gamma_1,\tau_1}\in\signalmap_1(w_1),\Tuple{\gamma',\tau'}\in\signalmap_3(w'), t_1\in\T_1, t_3\in\T_3, k_1=\timescaleUp{\tau_1}(t_1), k_3=\timescaleUp{\tau'}(t_3)$ s.t.
$\Tuple{x_1(t_1),x'(t_3)}\in\R_{1,3}$.\\
\begin{inparaitem}[$\blacktriangleright$]
 \item From the construction of $\R_{1,3}$ we know that there exists some 
$\Tuple{w'',x''}\in\BehS{2}$, $\Tuple{\gamma'',\tau''}\in\signalmap_2(w'')$,$t_2\in\T_2$, $k_2=\timescaleUp{\tau_2}(t_2)$ s.t. $\Tuple{x_1(t_1),x''(t_2)}\in\R_{1,2}$ and $\Tuple{x''(t_2),x'(t_3)}\in\R_{2,3}$.\\
\item This implies that we can fix some $\Tuple{w_2,x_2}\in\BehS{2}$, $\Tuple{\gamma_2,\tau_2}\in\signalmap_2(w_2)$ s.t. $\Omega(\cdot_1,\cdot'',\cdot_2)$ holds and therefore  $\Tuple{x_1(t_1),x_2(t_2)}\in\R_{1,2}$ and $\Tuple{x_2(t_2),x'(t_3)}\in\R_{2,3}$.\\ 
\item This implies that we can fix some $\Tuple{w_3,x_3}\in\BehS{3}$, $\Tuple{\gamma_3,\tau_3}\in\signalmap_3(w_3)$ s.t. $\Omega(\cdot_2,\cdot',\cdot_3)$ holds. 
\end{inparaitem}
\item With this choice of signals, we show that $\Omega(\cdot_1,\cdot',\cdot_3)$ also holds:\\
\begin{inparaitem}[$\blacktriangleright$]
\item Observe, that the second and third line of $\Omega(\cdot_1,\cdot',\cdot_3)$ are equivalent to the second and third line of $\Omega(\cdot_2,\cdot',\cdot_3)$, respectively.\\ \item Using the first line of $\Omega(\cdot_1,\cdot'',\cdot_2)$ and $\Omega(\cdot_2,\cdot',\cdot_3) $ we get $\gamma_3=\CONCAT{\gamma'}{k_3}{k_2}{\gamma_2}=\CONCAT{\gamma'}{k_3}{k_2}{\CONCAT{\gamma''}{k_2}{k_1}{\gamma_1}}=\CONCAT{\gamma'}{k_3}{k_1}{\gamma_1}$ implying that the first line of $\Omega(\cdot_1,\cdot',\cdot_3)$ holds. \\
\item Finally, to show that the last line of $\Omega(\cdot_1,\cdot',\cdot_3)$ holds, observe that it is equivalent to 
\begin{equation}\label{equ:proof:2}
 \AllQSplit{k\geq k_2,t_1'\in\timescaleDown{\tau_1}(k-k_2+k_1),t_1'>t_1}{\ExQ{t_2'\in\timescaleDown{\tau_2}(k),t_2'>t_2, t_3'\in\timescaleDown{\tau_3}(k),t_3'>t_3}{
\begin{propConjA}
\Tuple{x_1(t_1'),x_2(t_2')}\in\R_{1,2}\\
\Tuple{x_2(t_2'),x_3(t_3')}\in\R_{2,3}
\end{propConjA}.}}
\end{equation}
To show that \eqref{equ:proof:2} holds, fix $k\geq k_3,t_1'\in\timescaleDown{\tau_1}(k-k_3+k_1),t_1'>t_1,t_3'\in\timescaleDown{\tau_3}(k),t_3'>t_3 $ and pick $t_2'\in\timescaleDown{\tau_2}(k-k_3+k_2),t_2'>t_2$. \\
\begin{inparaitem}[$\triangleright$]
\item With this choice it follows immediately from the last line of $\Omega(\cdot_2,\cdot',\cdot_3)$ that $\Tuple{x_2(t_2'),x_3(t_3')}\in\R_{2,3}$.\\
\item If we now pick $\tilde{k}=k-k_3+k_2$, we have $\tilde{k}\geq k_2$, since $k\geq k_3$.\\
\item Now it follows from $t_1'\in\timescaleDown{\tau_1}(k-k_3+k_1)$ that $t_1'\in\timescaleDown{\tau_1}(\tilde{k}-k_2+k_1)$ and from $t_2'\in\timescaleDown{\tau_2}(k-k_3+k_2)$ that $t_2'\in\timescaleDown{\tau_2}(\tilde{k})$. \\
\item Using the last line of $\Omega(\cdot_1,\cdot'',\cdot_2)$ this implies that $\Tuple{x_1(t_1'),x_2(t_2')}\in\R_{1,2}$%
\end{inparaitem}%
\end{inparaitem}.
\end{compactitem}\vspace{-0.4cm}
\end{proof}\vspace{0.5cm}

\begin{theorem}\label{thm:SimAsPreorder:wsync}
The relations $\kg{\wsync}$ and $\kg{\sync}$ are preorders for the class of \textbf{externally synchronous} state space $\signalmap$-dynamical systems. 
\end{theorem}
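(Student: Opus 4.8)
The plan is to re-run the proof of \REFthm{thm:SimAsPreorder} essentially verbatim, but to consider only $\kg{\wsync}$ and $\kg{\sync}$ (dropping $\kg{\async}$ and $\kg{l}$) and to check that every appeal to the concatenation property of \REFdef{def:SSDynSysInducesFiniteV} is made only at \emph{matching} external time. The point is that \eqref{equ:def:SimRel:b} enters the definitions of both $\kg{\wsync}$ and $\kg{\sync}$ only under the restriction $k=k_1=k_2$ (and, for $\kg{\sync}$, additionally $t=t_1=t_2$), so the weaker externally synchronous, respectively synchronous, instance of \eqref{equ:def:SSDynSysInducesFiniteV} --- which is all the system class now grants --- is exactly what is needed.

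For reflexivity I would reuse the identity relation $\R$ with $\propAequ{\Tuple{\xi_1,\xi_2}\in\R}{\xi_1=\xi_2}$. Conditions \eqref{equ:def:SimRelSync:a} and \eqref{equ:def:SimRelSync:2:a} hold by construction, and \eqref{equ:def:SimRel:b} is checked only for $k=k_1=k_2$. I would take the same concatenation witnesses $w_2,x_2,\tau_2,\gamma_2$ as in \eqref{equ:proof:construct_preoreder}. The sole place where the proof of \REFthm{thm:SimAsPreorder} invoked the state property was the conclusion $\Tuple{w_2,x_2}\in\BehS{}$ and $\Tuple{\gamma_2,\tau_2}\in\signalmap(w_2)$ from $x_1(t_1)=x'(t_2)$; since we now work at $k_1=k_2$, this is precisely the externally synchronous instance of \eqref{equ:def:SSDynSysInducesFiniteV}, and imposing in addition $t_1=t_2$ gives the synchronous instance. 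The remaining three conjuncts of $\Omega$ are verified exactly as before.

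For transitivity I would pick $\EpS{1},\EpS{2},\EpS{3}$ with $\propConj{\BR{\EpS{1}\kg{}\EpS{2}}}{\BR{\EpS{2}\kg{}\EpS{3}}}$, form the composition $\R_{1,3}$ via $\propAequ{\Tuple{\xi_1,\xi_3}\in\R_{1,3}}{\ExQ*{\xi_2\in\X_2}{\propConj*{\Tuple{\xi_1,\xi_2}\in\R_{1,2}}{\Tuple{\xi_2,\xi_3}\in\R_{2,3}}}}$, and chain the two (b)-conditions. The initial conditions \eqref{equ:def:SimRelSync:a} and \eqref{equ:def:SimRelSync:2:a} transfer directly: chaining \eqref{equ:def:SimRelSync:a} for $\R_{1,2}$ and for $\R_{2,3}$ yields, for each $\xi_1\in\Xk{k}{1}$, an intermediate $\xi_2\in\Xk{k}{2}$ and a target $\xi_3\in\Xk{k}{3}$ all at the \emph{same} external time $k$, which is the very strengthening over \eqref{equ:def:SimRel:a} that an externally synchronous relation supplies. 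Once the signals are fixed, the telescoping of the external-signal concatenations, $\CONCAT{\gamma'}{k_3}{k_2}{\CONCAT{\gamma''}{k_2}{k_1}{\gamma_1}}=\CONCAT{\gamma'}{k_3}{k_1}{\gamma_1}$, and the matching of pasts and of future related states close exactly as in \REFthm{thm:SimAsPreorder}.

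The step I expect to be the main obstacle is the treatment of the intermediate witness inside \eqref{equ:def:SimRel:b}. When $\Tuple{x_1(t_1),x'(t_3)}\in\R_{1,3}$ is unpacked at $k_1=k_3=k$, one obtains $\xi_2$ with $\Tuple{x_1(t_1),\xi_2}\in\R_{1,2}$ and $\Tuple{\xi_2,x'(t_3)}\in\R_{2,3}$, but to apply the externally synchronous (b)-condition of $\R_{1,2}$ and then of $\R_{2,3}$ one must realize $\xi_2$ by a trajectory of $\EpS{2}$ at the common external time, i.e.\ one needs $\xi_2\in\Xk{k}{2}$. In the asynchronous proof of \REFthm{thm:SimAsPreorder} the intermediate external time $k_2$ was free and any realization of $\xi_2$ served, whereas here the forced equality $k_1=k_2=k_3=k$ makes the external time of the intermediate realization essential. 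Discharging this --- showing that the witness of the composition can be realized at external time $k$ --- is the heart of the argument and is exactly where the externally synchronous hypothesis on $\EpS{2}$ is indispensable; it also explains why $\kg{\async}$ and $\kg{l}$, whose $\Omega$-clause must hold at arbitrary $k_1\neq k_2$, are deliberately excluded from the statement.
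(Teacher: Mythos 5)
Your overall plan coincides with the paper's: its entire proof of this theorem is the one-line remark that the proof of \REFthm{thm:SimAsPreorder} can be repeated verbatim under the substitution $k=k_1=k_2$ (and, for $\kg{\sync}$, additionally $t=t_1=t_2$), which is legitimate precisely because \eqref{equ:def:SimRel:b} enters the definitions of $\kg{\wsync}$ and $\kg{\sync}$ only under that restriction. Your reflexivity analysis --- the state property \eqref{equ:def:SSDynSysInducesFiniteV} is invoked exactly once, to certify the concatenated witnesses of \eqref{equ:proof:construct_preoreder}, and under the restriction only its externally synchronous (resp.\ synchronous) instance is needed --- is exactly the paper's justification, and your treatment of the initial conditions and of the telescoping concatenation in transitivity also matches.

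The problem is your last paragraph. The externally synchronous instance of \eqref{equ:def:SSDynSysInducesFiniteV} is a closure-under-concatenation property; it cannot manufacture a realization of the intermediate state $\xi_2$ of the composed relation $\R_{1,3}$ at the prescribed external time $k$: if $\xi_2$ is never visited at external time $k$ by any trajectory of $\EpS{2}$, no amount of gluing trajectories at matching external times will make it so. Hence the step you declare to be ``the heart of the argument'' is not discharged by the hypothesis you invoke, and your proposal leaves it open. Note that the paper does not introduce any new idea at this point either: in both the asynchronous and the restricted case it reads the realization $\Tuple{w'',x''}\in\BehS{2}$, $\Tuple{\gamma'',\tau''}\in\signalmap_2(w'')$, $x''(t_2)=\xi_2$ directly off ``the construction of $\R_{1,3}$'', i.e., it implicitly treats related intermediate states as states realized by a trajectory at the relevant external time --- the same implicit step you already accepted when copying the asynchronous transitivity argument, so nothing specific to the externally synchronous class happens there. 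Relatedly, your explanation of why $\kg{\async}$ and $\kg{l}$ are excluded from the statement is off target: the obstruction is reflexivity, not transitivity. For a merely externally synchronous system, the identity relation cannot be shown to satisfy \eqref{equ:def:SimRel:b} at arbitrary $k_1\neq k_2$, because the required concatenations $\CONCAT{w'}{t_2}{t_1}{w_1}$ at mismatched external times are no longer guaranteed to lie in $\BehS{}$ with the concatenated pair in $\signalmap(\cdot)$; this is precisely the point your own first paragraph makes, and it --- not the intermediate-witness issue --- is what confines the theorem to $\kg{\wsync}$ and $\kg{\sync}$.
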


\begin{proof}
 This proof is identical to the proof of \REFthm{thm:SimAsPreorder} by using $k=k_1=k_2$ in all statements. This substitution is applicable since \eqref{equ:def:SimRel:b} is also restricted to $k=k_1=k_2$ for $\kg{\wsync}$ and $\kg{\sync}$.
\end{proof}\vspace{0.5cm}

\begin{theorem}\label{thm:SimAsPreorder:sync}
The relation $\kg{\sync}$ is a preorder for the class of \textbf{synchronous} state space $\signalmap$-dynamical systems.
\end{theorem}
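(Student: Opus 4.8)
The plan is to recycle the proof of \REFthm{thm:SimAsPreorder} almost verbatim, now imposing the two identifications $k=k_1=k_2$ \emph{and} $t=t_1=t_2$ throughout (and $k=k_1=k_2=k_3$, $t=t_1=t_2=t_3$ in the transitivity step), exactly as \REFthm{thm:SimAsPreorder:wsync} did for the single identification $k=k_1=k_2$. Recall that for $\kg{\sync}$ we have $\T=\T_1=\T_2$ by hypothesis, the reachability condition \eqref{equ:def:SimRelSync:2:a} replaces \eqref{equ:def:SimRel:a}, and \eqref{equ:def:SimRel:b} is by definition only demanded for $k=k_1=k_2$ and $t=t_1=t_2$. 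The crucial compatibility point to check is that the only state property invoked in the borrowed proof is \REFdef{def:SSDynSysInducesFiniteV}, and that under these identifications every such invocation uses nothing beyond its \textbf{synchronous} instance, which is precisely the property available for this system class.

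For \textbf{reflexivity} I would again take $\R$ to be the identity on $\X$ and, given the left side of \eqref{equ:def:SimRel:b} under the identifications above, build $w_2,x_2,\gamma_2,\tau_2$ as in \eqref{equ:proof:construct_preoreder}. Because $t_1=t_2=t$ and $k_1=k_2=k$, the correction term $c$ vanishes and each asynchronous concatenation $\CONCAT{\cdot}{t_2}{t_1}{\cdot}$ collapses to the synchronous one $\CONCAT{\cdot}{}{t}{\cdot}$. The identity relation forces $x_1(t)=x'(t)$, so the memberships $\Tuple{w_2,x_2}\in\BehS{}$ and $\Tuple{\gamma_2,\tau_2}\in\signalmap(w_2)$ follow from the synchronous form of \REFdef{def:SSDynSysInducesFiniteV}; the three leading conjuncts of $\Omega$ follow directly from the construction, and its last line follows as before with the choice $t_2'=t_1'$, for which $x_2(t_2')=x_1(t_1')$.

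For \textbf{transitivity} I would keep the composed relation $\R_{1,3}$ and run the same three applications of $\Omega$ from \REFthm{thm:SimAsPreorder}, now with all internal and external entry times identified. Under this identification the shifted indices $k-k_c+k_a$ in the last line of $\Omega$ all reduce to $k$, so the index bookkeeping that was the technical core of the asynchronous transitivity argument trivializes: one simply intersects the three synchronous state trajectories at the common time, and the delicate choice $\tilde k=k-k_3+k_2$ becomes $\tilde k=k$. The main obstacle is therefore not a computation but a verification: one must confirm that each of the three concatenations furnishing the intermediate signals is legitimate using \emph{only} the synchronous form of \REFdef{def:SSDynSysInducesFiniteV}, never an asynchronous instance. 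This holds because $t=t_1=t_2$ aligns all concatenation points, so the synchronous state property suffices at every step. Combining these verifications shows that \eqref{equ:def:SimRelSync:2} together with the suitably restricted \eqref{equ:def:SimRel:b} hold for $\R_{1,3}$, whence $\kg{\sync}$ is reflexive and transitive, i.e.\ a preorder.
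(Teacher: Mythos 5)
Your proposal is correct and takes essentially the same route as the paper: the paper's own proof simply declares the argument identical to that of \REFthm{thm:SimAsPreorder} after substituting $k=k_1=k_2$ and $t=t_1=t_2$ in all statements, justified because \eqref{equ:def:SimRel:b} is only required in this restricted form for $\kg{\sync}$. Your additional verifications (the vanishing of the correction term $c$, the collapse of each concatenation to $\CONCAT{\cdot}{}{t}{\cdot}$, the reduction of the shifted indices $k-k_c+k_a$ to $k$, and that only the synchronous instance of \REFdef{def:SSDynSysInducesFiniteV} is ever invoked) are precisely the details the paper leaves implicit.
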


\begin{proof}
This proof is identical to the proof in \REFthm{thm:SimAsPreorder} by using $k=k_1=k_2$ and  $t=t_1=t_2$ in all statements. This substitution is applicable since \eqref{equ:def:SimRel:b} is also restricted to $k=k_1=k_2$ and  $t=t_1=t_2$ for $\kg{\sync}$.
\end{proof}\vspace{0.5cm}

\begin{corollary}\label{thm:BisimAsEquivalenceRelation}
The relations $\hg{\async}$, $\hg{\wsync}$, $\hg{\sync}$ and $\hg{l}$ are equivalence relations for the class of \textbf{asynchronous} state space $\signalmap$-dynamical systems. 
Furthermore, the relations $\hg{\wsync}$ and $\hg{\sync}$ are equivalence relations for the class of \textbf{externally synchronous} state space $\signalmap$-dynamical systems, and the relation $\hg{\sync}$ is an equivalence relation for the class of \textbf{synchronous} state space $\signalmap$-dynamical systems.
\end{corollary}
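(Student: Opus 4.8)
The plan is to reduce each claim in \REFthm{thm:BisimAsEquivalenceRelation} to the matching preorder result already proved in \REFthm{thm:SimAsPreorder}, \REFthm{thm:SimAsPreorder:wsync} and \REFthm{thm:SimAsPreorder:sync}, using the fact that \REFdef{def:BisimRel_Behaviour} defines each bisimilarity $\EpS{1}\hg{}\EpS{2}$ through the existence of a \emph{single} relation $\R$ whose two directions $\R$ and $\R^{-1}$ are simulation relations from $\EpS{1}$ to $\EpS{2}$ and from $\EpS{2}$ to $\EpS{1}$. Proving that a given $\cong$ is an equivalence relation then reduces to reflexivity, symmetry and transitivity, and I would handle all four types ($\hg{\async}$, $\hg{\wsync}$, $\hg{\sync}$, $\hg{l}$) uniformly, invoking for each system class the preorder theorem whose hypotheses it satisfies.

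For reflexivity I would reuse the identity relation $\R$ with $\propAequ{\Tuple{\xi_1,\xi_2}\in\R}{\xi_1=\xi_2}$ from the reflexivity part of the proof of \REFthm{thm:SimAsPreorder}: that argument already establishes that $\R$ is a simulation relation of the relevant type, and since the identity satisfies $\R^{-1}=\R$, its inverse is trivially a simulation relation as well, so $\R$ witnesses $\EpS{}\hg{}\EpS{}$. Symmetry is immediate from the definition, because if $\R$ witnesses $\EpS{1}\hg{}\EpS{2}$ then, using $(\R^{-1})^{-1}=\R$, the relation $\R^{-1}$ witnesses $\EpS{2}\hg{}\EpS{1}$.

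The substantive step is transitivity. Given witnesses $\R_{1,2}$ for $\EpS{1}\hg{}\EpS{2}$ and $\R_{2,3}$ for $\EpS{2}\hg{}\EpS{3}$, I would form the relational composition $\R_{1,3}$ given by $\propAequ{\Tuple{\xi_1,\xi_3}\in\R_{1,3}}{\ExQ{\xi_2\in\X_2}{\propConj{\Tuple{\xi_1,\xi_2}\in\R_{1,2}}{\Tuple{\xi_2,\xi_3}\in\R_{2,3}}}}$, which is precisely the construction used in the transitivity part of the proof of \REFthm{thm:SimAsPreorder}; that part directly yields that $\R_{1,3}$ is a simulation relation from $\EpS{1}$ to $\EpS{3}$. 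The key observation for the reverse direction is that composition commutes with inversion in reversed order, so that $\R_{1,3}^{-1}=\R_{1,2}^{-1}\circ\R_{2,3}^{-1}$, i.e. $\R_{1,3}^{-1}$ is the composition of $\R_{2,3}^{-1}$ (a simulation relation from $\EpS{3}$ to $\EpS{2}$) followed by $\R_{1,2}^{-1}$ (a simulation relation from $\EpS{2}$ to $\EpS{1}$). Applying the same transitivity argument a second time to these two inverse relations shows $\R_{1,3}^{-1}$ is a simulation relation from $\EpS{3}$ to $\EpS{1}$, so that $\R_{1,3}$ witnesses $\EpS{1}\hg{}\EpS{3}$ and transitivity follows.

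I expect the main obstacle to be bookkeeping rather than genuine mathematics: one must verify carefully that the set identity $\R_{1,3}^{-1}=\R_{1,2}^{-1}\circ\R_{2,3}^{-1}$ holds and that the composition construction from the preorder proof can be reinvoked verbatim for the inverse direction, all while tracking which preorder theorem (asynchronous, externally synchronous, or synchronous) governs the system class under consideration. Once these correspondences are pinned down, the corollary is a direct consequence of the preorder structure together with the symmetric form of \REFdef{def:BisimRel_Behaviour}, requiring no new estimates on behaviors or time-scale transformations.
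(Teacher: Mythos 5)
Your proposal is correct and takes essentially the same route as the paper: both reduce reflexivity and transitivity of each bisimilarity to the corresponding preorder results (\REFthm{thm:SimAsPreorder}~--~\ref{thm:SimAsPreorder:sync}) and obtain symmetry by passing to the inverse relation. Your treatment is in fact slightly more careful than the paper's one-line appeal to those theorems: the explicit identity $\R_{1,3}^{-1}=\R_{1,2}^{-1}\circ\R_{2,3}^{-1}$, which lets you rerun the composition argument in the reverse direction, is precisely the detail needed to ensure that a \emph{single} relation witnesses bisimilarity of $\EpS{1}$ and $\EpS{3}$ as required by \REFdef{def:BisimRel_Behaviour}, a point the paper leaves implicit.
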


\begin{proof}
A relation is an equivalence relation, if it is reflexive, transitive and symmetric. 
From \REFdef{def:BisimRel_Behaviour}, it follows that all relations $\hg{}$ are defined by two simulation relations. Therefore reflexivity and transitivity follows from \REFthm{thm:SimAsPreorder}~-~\ref{thm:SimAsPreorder:sync}.

To prove symmetry, pick arbitrary $\EpS{1},\EpS{2}$ and show $\propImp{\BR{\EpS{1}\hg{}\EpS{2}}}{\BR{\EpS{2}\hg{}\EpS{1}}}$. Observe that it follows immediately from \REFdef{def:BisimRel_Behaviour} that for any bisimulation relation $\R$ between $\EpS{1}$ and $\EpS{2}$ we can pick $\tilde{R}=\R^{-1}$ as a bisimulation relation between $\EpS{2}$ and $\EpS{1}$, implying $\EpS{2}\hg{}\EpS{1}$.
\end{proof}\vspace{0.5cm}

\section{Conclusion}
 We have proposed a behavioral system model with distinct external and internal signals possibly evolving on different time scales. For this new system model different notions of simulation and bisimulation were derived and their soundness was proven. In Remarks~\ref{rem:0},~\ref{rem:GFS} and \ref{rem:TS}, we discussed in an intuitive manner that our notion can capture a broad selection of similarity concepts available in the literature. The formal proofs of these intuitive connections will be presented in a subsequent paper. 
 It is our goal for the near future to use the presented framework to compare existing abstraction techniques in the control systems community.

\end{document}